\newtheorem{theorem}{Theorem}
\newtheorem*{theorem*}{Theorem}
\newtheorem{definition}[theorem]{Definition}
\newtheorem{lemma}[theorem]{Lemma}
\newtheorem{observation}[theorem]{Observation}
\newcommand\numberthis{\addtocounter{equation}{1}\tag{\theequation}}
\newcommand{\ra}{\mathit{rank}}
\newcommand{\Y}{\mathbf{Y}}
\title{Fair and Efficient Allocations under Subadditive Valuations}
\author{Bhaskar Ray Chaudhury\thanks{MPI for Informatics, Saarland Informatics Campus, Graduate School of Computer Science, Saarbr\"ucken, Germany}\\ \texttt{\small braycha@mpi-inf.mpg.de} \and Jugal Garg\thanks{University of Illinois at Urbana-Champaign}\\ \texttt{\small jugal@illinois.edu}  \and Ruta Mehta\thanks{University of Illinois at Urbana-Champaign}\\ \texttt{\small rutameht@illinois.edu}}
\date{}
\begin{document}
\maketitle

\begin{abstract}
We study the problem of allocating a set of indivisible goods among agents with subadditive valuations in a fair and efficient manner. Envy-Freeness up to any good (EFX) is the most compelling notion of fairness in the context of indivisible goods. Although the existence of EFX is not known beyond the simple case of two agents with subadditive valuations, some good approximations of EFX are known to exist, namely $\tfrac{1}{2}$-EFX allocation~\cite{TimPlaut18} and EFX allocations with bounded charity~\cite{CKMS20}.

Nash welfare (the geometric mean of agents' valuations) is one of the most commonly used measures of efficiency. In case of additive valuations, an allocation that maximizes Nash welfare also satisfies fairness properties like Envy-Free up to one good (EF1). Although there is substantial work on approximating Nash welfare when agents have additive valuations, very little is known when agents have subadditive valuations. 
In this paper, we design a polynomial-time algorithm that outputs an allocation that satisfies either of the two approximations of EFX as well as achieves an $\mathcal{O}(n)$ approximation to the Nash welfare. Our result also improves the current best-known approximation of $\mathcal{O}(n \log n)$~\cite{GargKK20} and $\mathcal{O}(m)$~\cite{NguyenR14} to Nash welfare when agents have submodular and subadditive valuations, respectively. 

Furthermore, our technique also gives an $\mathcal{O}(n)$ approximation to a family of welfare measures, $p$-mean of valuations for $p\in (-\infty, 1]$, thereby also matching asymptotically the current best approximation ratio for special cases like $p =-\infty$~\cite{KhotPonuswami07} while also retaining the remarkable fairness properties.  
\end{abstract}

\section{Introduction}
\label{introduction}
Discrete fair division of resources is a fundamental problem in various multi-agent settings, where the goal is to partition a set $M$ of $m$ \emph{indivisible} goods among $n$ agents in a \emph{fair} and \emph{efficient} manner. Each agent $i$ has a valuation function $v_i: 2^M\rightarrow \mathbb{R}_{\geq 0}$ that quantifies the amount of utility $i$ derives from every subset of goods. We assume that $v_i$'s are \emph{monotone}, i.e., $v_i(A)\leq v_i(A \cup \left\{ g\right\} )$ for all $g \in M$, \emph{normalized} i.e., $v_i(\emptyset) = 0$ and \emph{subadditive}, i.e., $v_i(A\cup B) \leq v_i(A) + v_i(B)$, for all $A, B\subseteq M$. Subadditive functions naturally arise in practice because they capture the notion of \emph{complement-freeness}~\cite{LehmannLN06}. Furthermore, they strictly contain \emph{submodular} functions\footnote{A function $v(.)$ is submodular if $v(A) + v(B) \geq v(A\cup B) + v(A\cap B), \forall A, B\subseteq M$.}, which capture the notion of diminishing marginal returns. 

Among various choices, \emph{envy-freeness} is the most natural fairness concept, where no agent $i$ envies another agent $j$'s bundle, i.e., partition of goods into $n$ bundles $X_1, X_2, \dots, X_n$ so that for all agents $i$ and $j$, we have $v_i(X_i)\ge v_i(X_j)$. However, envy-free allocation do not always exist, e.g., consider allocating a single valuable good among two agents. Its mild relaxation envy-freeness up to \emph{any} good (EFX)~\cite{CaragiannisKMP016} is arguably the most compelling notion of fairness in discrete setting, where no agent envies other's allocation after the removal of \emph{any} good, i.e., for all agents $i$ and $j$, we have $v_i(X_i) \ge v_i(X_j\setminus  \left\{ g \right\})$ for all $g\in X_j$. While it is not known whether an EFX allocation always exists or not beyond the simple case of two agents under subadditive valuations, the following relaxations exist: 
\begin{itemize}
	\item $\tfrac{1}{2}$-EFX allocation $X = \langle X_1,X_2, \dots ,X_n \rangle$ where $v_i(X_i) \ge \tfrac{1}{2} \cdot v_i(X_j\setminus  \left\{ g \right\})$, for all $ g\in X_j$~\cite{TimPlaut18}. In this paper we will be referring to a relaxed version of $\tfrac{1}{2}$-EFX namely, $(\tfrac{1}{2} - \varepsilon)$-EFX allocation where $v_i(X_i) \ge (\tfrac{1}{2} - \varepsilon) \cdot v_i(X_j\setminus  \left\{ g \right\})$  for all $g\in X_j$. A $(\tfrac{1}{2} - \varepsilon)$-EFX allocation can also be computed in polynomial time when agents have subadditive valuations.
	\item EFX allocation with \emph{bounded charity} $X = \langle X_1,X_2, \dots ,X_n \rangle$ where we do not allocate a set $P$ of goods (set $P$ is donated to charity) where $|P|<n$ and $v_i(X_i) \ge v_i(P)$,for all $i \in [n]$ and the partial allocation $X$ is EFX~\cite{CKMS20}. There is also a polynomial time algorithm to find an $(1-\varepsilon)$-EFX allocation with bounded charity for general valuations for any $\varepsilon>0$~\cite{CKMSarxiv}\footnote{This is an updated version of the paper which goes beyond the preliminary version published in SODA 2020}.
\end{itemize}

Another popular (and stronger) relaxation is  envy-freeness up to \emph{one} good (EF1)~\cite{budish2011combinatorial}, where no agent envies other's allocation after the removal of \emph{some} good from the other's bundle, i.e., $v_i(X_i) \ge v_i(X_j\setminus  \left\{ g \right\})$, for \emph{some} $g\in X_j$. Clearly, EFX implies EF1. Although the existence of EFX allocations still remains a major open question, an EF1 allocation always exists for general valuations and can be obtained in polynomial time~\cite{LiptonMMS04}. 

We note that none of the above algorithms provides, as such, any efficiency guarantees. For efficiency, among many choices, maximum Nash welfare, defined as the geometric mean of agents' valuations, serves as a focal point. In contrast to other popular welfare measures such as social welfare and max-min welfare, Nash welfare is scale invariant, i.e., scaling one agent's valuation by any positive constant does not change the outcome. In case of additive valuations\footnote{A valuation function $v(.)$ is additive if $v_i(S) = \sum_{j\in S} v_i(\{j\}), \forall S$.}, an allocation that maximizes Nash welfare is both EF1 and Pareto optimal\footnote{An allocation $X'=(X'_1, \dots, X'_n)$ Pareto dominates another allocation $X=(X_1, \dots, X_n)$ if $v_i(X'_i)\geq v_i(X_i),\forall i$ and $v_k(X'_k)>v_k(X_k)$ for some $k$. An allocation $X$ is Pareto optimal if no allocation $X'$ dominates $X$.}~\cite{CaragiannisKMP016}. However, such an allocation does not provide the EF1 property beyond additive (e.g., subadditive valuations~\cite{CaragiannisKMP016}), and further, no meaningful guarantee in terms of EFX even in the case of additive valuations~\cite{ABFHV20}. Furthermore, maximizing the Nash welfare is a hard problem, and the best known approximation guarantees are $\mathcal{O}(n\log{n})$ and $\mathcal{O}(m)$ for submodular~\cite{GargKK20} and subadditive~\cite{NguyenR14} valuations, respectively. As the case with the algorithms providing fairness guarantees, these Nash welfare approximation algorithms do not provide any fairness guarantees. Therefore, a natural question is: 

\begin{quote}
	\begin{center}
		\emph{Does there exist a polynomial-time algorithm that provides the best known fairness guarantees as well as the best known efficiency guarantees simultaneously?}
	\end{center}
\end{quote}

In this paper, we answer this question affirmatively. We design a simple algorithm that outputs an allocation that provides $(i)$ either of the best-known EFX approximations mentioned above, $(ii)$ EF1 guarantee, and $(iii)$ $\mathcal{O}(n)$ approximation to the maximum Nash welfare. The latter also improves the best-known approximation factor. Further, we show that our algorithm can be easily adapted to obtain the same guarantees for the entire family of $p$-mean welfare measures $M_p(X)$, defined as, \[M_p(X) = (\sum_{i} \tfrac{1}{n}(v_i(X_i))^p)^{1/p} \ \ \text{  for  } p\in (-\infty, 1].\] The $p=-\infty, 0$, and $1$ correspond to the well-studied cases of max-min welfare, Nash welfare, and social welfare, respectively. We note that this also matches the current best approximation ratio for the max-min welfare~\cite{KhotPonuswami07} while also retaining the above mentioned fairness guarantees.

One crucial difference between Nash welfare and $p$-mean welfare when $p \neq 0$  is that $p$-mean is no longer scale invariant. Therefore, it is not intuitive that the allocation that maximizes welfare will be fair.\footnote{Consider a special case when $p = -\infty$. Here the $p$-mean welfare is equal to the valuation of the agent with smallest valuation. In particular, consider the scenario with two agents and $n$ goods where agent 1 has a valuation of $1$ for each good and agent 2 has a valuation of $\varepsilon < \tfrac{1}{n}$ for each good. The allocation that maximizes the $p$-mean welfare here will give exactly one good to agent 1 and $n-1$ goods to agent 2, which is very far from satisfying any relaxation of envy-freeness.} However, we manage to give a polynomial time algorithm that achieves a good approximation (independent of the number of goods in the instance) to the $p$-mean welfare while still retaining all the fairness properties.

\subsection{Technical Overview}\label{overview}
In this section, we briefly sketch our main result and overall approach. One of the primary motivation of our techniques is to show that finding certain ``fair allocations'' can give us ``reasonably efficient allocations''. While any arbitrary EF1 allocation does not give us any guarantee on the generalized $p$-mean welfare even in the context of additive valuations, we would outline in this paper that certain EFX allocations (need not be Pareto-optimal) and even approximate EFX allocations can help us get good approximations to a broad class of welfare measures like the generalized $p$-mean welfare, further showing the strength of this measure of fairness. We now state the main result of our paper

\begin{theorem}
	Given a discrete fair division instance with a set $[n]$ of $n$ agents, a set $M$ of $m$ indivisible goods, where each agent $i$ has a subadditive valuation function $v_i: 2^M \rightarrow \mathbb{R}_{\geq 0}$, for any $\varepsilon > 0$,  we can find in polynomial time 
	\begin{itemize}
		\item  a partition $\langle X_1,X_2, \dots ,X_n \rangle$ of $M$ such that $X$ is $(\tfrac{1}{2} - \varepsilon)$-EFX and $M_p(X) \geq \tfrac{1- 2\varepsilon}{8(n+1)} \cdot M_p(X^*)$, and 
		\item a partition $\langle X_1,X_2, \dots ,X_n,P \rangle$ of $M$ such that $X$ is $(1 - \varepsilon)$-EFX with bounded charity and $M_p(X) \geq \tfrac{1-\varepsilon}{4(n+1)} \cdot M_p(X^*)$,
	\end{itemize}
	where $X^*$ is the allocation with maximum $p$-mean value.
\end{theorem}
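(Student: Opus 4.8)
## Proof Proposal

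The plan is to extract from an (approximate) EFX allocation a good approximation to the optimal $p$-mean welfare by a careful case analysis on the structure of the optimal allocation $X^*$. The starting point is that the algorithms cited in the excerpt give us, in polynomial time, either a $(\tfrac{1}{2}-\varepsilon)$-EFX allocation $X$ or a $(1-\varepsilon)$-EFX allocation $X$ with bounded charity (a leftover set $P$ with $|P|<n$ and $v_i(X_i)\ge v_i(P)$ for all $i$). Fix one such $X$; I will compare $M_p(X)$ against $M_p(X^*)$ agent by agent, or rather, I will charge the value $v_i(X_i^*)$ that agent $i$ gets in the optimum to bundles in $X$.

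The core combinatorial step is the following charging argument. Consider any agent $i$ and its optimal bundle $X_i^*$. The goods of $X_i^*$ are spread among the $n$ bundles of $X$ (plus possibly $P$), so $X_i^* \subseteq \bigcup_{j} (X_i^* \cap X_j) \cup (X_i^* \cap P)$; by subadditivity $v_i(X_i^*) \le \sum_j v_i(X_i^* \cap X_j) + v_i(X_i^*\cap P)$. For the bounded-charity case, $v_i(X_i^*\cap P)\le v_i(P)\le v_i(X_i)$. For each $j$, I want to bound $v_i(X_i^*\cap X_j)$ in terms of $v_i(X_i)$ using approximate EFX. If $X_i^*\cap X_j$ is empty the term vanishes; otherwise pick any good $g\in X_i^*\cap X_j$, so that $X_i^*\cap X_j \subseteq (X_j\setminus\{g\})\cup\{g\}$, hence $v_i(X_i^*\cap X_j)\le v_i(X_j\setminus\{g\}) + v_i(\{g\})$. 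The approximate EFX condition bounds the first term: $v_i(X_j\setminus\{g\})\le \alpha^{-1} v_i(X_i)$ with $\alpha = \tfrac12-\varepsilon$ or $\alpha=1-\varepsilon$. The single-good term $v_i(\{g\})$ I would bound either by absorbing $g$ into the "remove one good" slack (choosing $g$ to be a least-valued good and noting $X_i^*\cap X_j$ minus its worst good is dominated), or — cleaner — by the standard trick of observing that we may assume every agent's own bundle $X_i$ is at least as valuable as any single good in the instance after a preprocessing step (otherwise reallocate that good and improve, or handle "big goods" separately). This yields $v_i(X_i^*)\le c\cdot n \cdot v_i(X_i)$ for an absolute constant $c$ (tracking through the two cases gives the $\tfrac{1}{8(n+1)}$ and $\tfrac{1}{4(n+1)}$ constants and the $\varepsilon$-dependence).

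Once we have a per-agent guarantee $v_i(X_i) \ge \tfrac{1}{c(n+1)}\, v_i(X_i^*)$ for every $i$, the $p$-mean bound follows by monotonicity of $M_p$: since $M_p$ is a (weighted-power) mean that is monotone nondecreasing in each coordinate and positively homogeneous of degree $1$, scaling every coordinate down by the same factor $\tfrac{1}{c(n+1)}$ scales $M_p$ by exactly that factor, so
\[
M_p(X) = \Bigl(\tfrac{1}{n}\sum_i v_i(X_i)^p\Bigr)^{1/p} \ge \tfrac{1}{c(n+1)}\Bigl(\tfrac{1}{n}\sum_i v_i(X_i^*)^p\Bigr)^{1/p} = \tfrac{1}{c(n+1)} M_p(X^*),
\]
and this works uniformly for all $p\in(-\infty,1]$, including the limiting case $p=-\infty$. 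The $(\tfrac12-\varepsilon)$-EFX and $(1-\varepsilon)$-EFX-with-bounded-charity properties of $X$ are inherited directly from the algorithms that produced it, so nothing more is needed there.

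The step I expect to be the main obstacle is handling the single "extra" good $g$ in the bound $v_i(X_i^*\cap X_j)\le v_i(X_j\setminus\{g\}) + v_i(\{g\})$: under pure subadditivity a single good can carry arbitrarily large value relative to a bundle, so the naive bound breaks. The fix — isolating high-value goods or arranging that each $X_i$ dominates every singleton — needs to be done so that it is compatible with maintaining the EFX-type guarantee and runs in polynomial time; I expect this is where the factor of $2$ (and the $n+1$ rather than $n$) in the final constants comes from, and it is the only genuinely delicate part of the argument.
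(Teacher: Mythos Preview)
Your proposal has a genuine gap, and it is precisely the one you flag in your last paragraph: bounding the single-good term $v_i(\{g\})$ is not a minor preprocessing issue but the entire crux of the argument, and your sketch does not contain a plan that works. Concretely, an arbitrary $(\alpha,c)$-EFX allocation can have $p$-mean welfare that is an unbounded factor below $M_p(X^*)$: take $n$ agents and $n$ goods where agent~$1$ values only good~$1$ and every other agent values every good equally; the allocation giving agent~$1$ good~$2$ and agent~$2$ good~$1$, etc., is EFX (every bundle is a singleton) but has Nash welfare~$0$. So your approach of ``take any approximate-EFX allocation and charge $X_i^*$ to the bundles of $X$'' cannot yield a per-agent bound $v_i(X_i)\ge \tfrac{1}{c(n+1)}v_i(X_i^*)$; the suggestion to ``arrange that each $X_i$ dominates every singleton'' is impossible in general and is not what the paper does.

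The paper's actual mechanism is quite different. For each agent $i$ let $H_i$ be her $n$ most valuable goods. In a first phase one computes a \emph{matching} $Y$ (each agent gets exactly one good) that optimizes the aggregate $\sum_i \big(n\cdot v_i(Y_i)+v_i(M\setminus H_i)\big)^p$ (or the appropriate variant for $p=0,-\infty$), then tweaks $Y$ so every agent weakly prefers $Y_i$ to any unmatched good. In the second phase one runs the $(\alpha,c)$-EFX algorithm \emph{starting from $Y$}. One then proves the per-agent bound $v_i(Z_i)\ge \tfrac{\alpha}{4(n+1)}\big(n\cdot v_i(Y_i)+v_i(M\setminus H_i)\big)$, using that all singleton bundles of $Z$ lie inside $\mathbf{Y}=\bigcup_i Y_i$. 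The comparison with $X^*$ is \emph{not} per-agent: letting $g_i^*$ be agent~$i$'s best good in $X_i^*$, one has $v_i(X_i^*)\le n\cdot v_i(g_i^*)+v_i(M\setminus H_i)$, and since the $g_i^*$ are distinct (they lie in disjoint $X_i^*$), the assignment $i\mapsto g_i^*$ is a feasible matching, so the optimality of $Y$ gives the required \emph{aggregate} inequality for the chosen $p$. The matching step---and the observation that $\{g_i^*\}$ is itself a feasible matching---is the key idea your proposal is missing.
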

We now briefly sketch our main techniques: Let us consider the scenario that a given instance admits an envy-free allocation, i.e., a partition of the goods into $n$ bundles $X_1,X_2, \dots ,X_n$ such that for all pairs of agents $i$ and $j$ we have $v_i(X_i) \geq v_i(X_j)$. In that case for each agent $i$ we have
\begin{align*}
n \cdot v_i(X_i) &\geq \sum_{j \in [n]} v_i(X_j)\\
&\geq v_i(\cup_{j \in [n]} X_j) &(\text{by subadditivity})\\
&=v_i(M)
\end{align*}
This implies that $v_i(X_i) \geq \tfrac{1}{n} \cdot v_i(M)$. Since in any optimal allocation no agent can get a valuation more than $v_i(M)$, we can conclude that each agent has a bundle worth $\tfrac{1}{n}$ times his bundle at optimum. This would immediately give us an $n$ approximation for generalized $p$-mean welfare. However, most instances may not admit an envy-free allocation. Naturally, we then look into the closest relaxation of envy-freeness that is known to exist in the context of indivisible goods\footnote{In our algorithm we consider relaxed variants of these notions like $(\tfrac{1}{2} - \varepsilon)$-EFX and $(1-\varepsilon)$-EFX with bounded charity, but for clarity in this section we keep the original notions}. ($\tfrac{1}{2}$-EFX ~\cite{TimPlaut18}  and EFX with \emph{bounded charity}\footnote{defined in Section~\ref{preliminaries}}~\cite{CKMS20}). So let us consider the $\tfrac{1}{2}$-EFX allocation: Here we can partition the given instance into $n$ bundles $X_1,X_2,\dots ,X_n$ such that for all pairs of agents $i$ and $j$ we have $v_i(X_i) \geq \tfrac{1}{2} v_i(X_j \setminus \left\{g\right\})$ for all $g \in X_j$. Let us first look into all the bundles $X_j$ that are not singleton, i.e., $\lvert X_j \rvert \geq 2$: We have that $v_i(X_i) \geq \tfrac{1}{2} \cdot v_i(X_j \setminus \left\{g\right\})$ for \emph{all} $g \in X_j$, implying that $v_i(X_i) \geq \tfrac{1}{2} \cdot \mathit{max}\Big(v_i(X_j \setminus \left\{g \right\}),v_i(\left\{g \right\})\Big)$ (as $\lvert X_j \rvert \geq 2$). Thus,
\begin{align*}
n \cdot v_i(X_i) &\geq \sum_{\lvert X_j \rvert \geq 2}\frac{1}{2} \cdot \frac{1}{2} \cdot \Big(v_i(X_j \setminus \left\{g \right\}) + v_i(\left\{g \right\}) \Big)\\
&\geq \frac{1}{4} \cdot \sum_{\lvert X_j \rvert \geq 2} v_i(X_j) &(\text{by subadditivity})\\
&\geq \frac{1}{4} \cdot v_i(\cup_{\lvert X_j \rvert \geq 2} X_j ) &(\text{by subadditivity}) \numberthis \label{eqtechnicaloverview}
\end{align*} 

Let $S$ be the set of all the goods in singleton bundles in $X$, i.e., $S = \left\{g \mid \text{there is a $X_j = \{g\}$} \right\}$. Then from~(\ref{eqtechnicaloverview}) we have the guarantee that for every agent $v_i(X_i) \geq \tfrac{1}{4n} \cdot v_i(M \setminus S)$. Therefore, in any $\tfrac{1}{2}$-EFX allocation every agent has an $\tfrac{1}{4n}$ fraction of his valuation on the goods he receives from $M \setminus S$ in the optimal allocation, i.e., $v_i \big(X^*_i \cap (M \setminus S) \big)$ where $X^*=(X^*_1,\dots,X^*_n)$ is the allocation that has the highest generalized $p$-mean welfare. The only problem is how to allocate the goods in the set $S$ appropriately. 

The only scenario where an incorrect allocation of the goods in $S$ causes a significant decrease in the $p$-mean welfare is when there are agents who have a substantially high valuation for some goods in $S$. However, we could very well be in a scenario where there are only a few goods in $S$ (say less than $\tfrac{n}{3}$) which are very valuable to many agents and then we may not be able to give every agent a bundle that he values $\tfrac{1}{n}$ times the whole set $S$.\footnote{A very simple scenario is to divide $n$ goods among $n$ agents with identical additive valuations, where all agents have a valuation of $1$ for a single good and $\varepsilon \ll \tfrac{1}{n}$ for the rest of the goods. In any division there will be $n-1$ agents who do not get $\tfrac{1}{n}$ of their valuation on the set of $n$ goods} Therefore we need to compare our allocation with the allocation that maximizes the $p$-mean welfare. 

We briefly sketch how we overcome this barrier. The good aspect of the situation is that the number of goods in $S$ are small, i.e., $\lvert S \rvert \leq n$.  Let $H_i$ denote the set of $n$ goods that are valued by agent $i$ the most, {\em i.e.}, all goods in $H_i$ are more valuable than any good outside $H_i$. Now we find a \emph{single good allocation} (where each agent gets exactly one good) of the \emph{high valued goods}, namely the set $\mathbf{H}=\cup_{i \in [n]} H_i$, optimally to the agents assuming that we can give each agent at least $\tfrac{1}{n}$ times their valuation for the \emph{low valued goods}, namely the set $M \setminus H_i$. That is, we find a single good allocation, where every agent $i$ gets exactly one high valued good $h_i\in H_i$, that maximizes $\sum_{i \in [n]} \big( v_i(\left\{h_i \right\}) + \tfrac{1}{n} v_i(M \setminus H_i) \big)^p$ (such allocations can be found efficiently with matching algorithms). Let us call the current single good allocation $Y$. Note that $Y$ is trivially EFX as every agent has exactly one good (therefore $Y$ is also $\tfrac{1}{2}$-EFX). We then run the $\tfrac{1}{2}$-EFX algorithm starting with $Y$ as the initial partial $\tfrac{1}{2}$-EFX allocation. \emph{The intuition being that the low valued goods appear in non-singleton bundles and the high valued goods occur in singleton bundles} in the final $\tfrac{1}{2}$-EFX allocation, but we have allocated the high valued goods correctly (up to a factor of $\tfrac{1}{n}$ as we computed a single good allocation, while the optimum need not necessarily give every agent exactly one high valued good) as we started out with an optimal allocation of the high valued goods. Since the low valued goods occur in non-singleton bundles we are indeed able to give every agent $\tfrac{1}{n}$ times their valuation for the low valued goods.

\subsection{Related Work}\label{sec:rel-work}
Fair division has been extensively studied for more than seventy years since the seminal work of Steinhaus~\cite{Steinhaus48}. A complete survey of all different settings and the fairness and efficiency notions used is well beyond the scope of this paper. We limit our attention to the discrete setting (as mentioned in Section~\ref{introduction}) and the two most universal notions of fairness, namely  \emph{envy-freeness} and \emph{proportionality}\footnote{In a proportional share, each agent receives at least a $1/n$ share of the goods.}. Both of these properties can be guaranteed in case of divisible goods. For indivisible goods, there are trivial instances (mentioned in Section~\ref{introduction}) where neither of these notions can be achieved by any allocation. However there has been extensive studies on relaxations of envy-freeness like EF1~\cite{BudishCKO17,BarmanBMN18,LiptonMMS04,CaragiannisKMP016} and EFX~\cite{CKMS20,CaragiannisGravin19,CaragiannisKMP016,TimPlaut18} and relaxations of proportionality like maximin shares (MMS)~\cite{budish2011combinatorial,BL16,AMNS17,BK17,KPW18,GhodsiHSSY18,JGargMT19,GargT19} and proportionality up to one good (PROP1)~\cite{ConitzerFS17,BarmanK19,GargM19}. 

While there is a significant interest in  finding fair allocations, there has also been a lot of interest in guaranteeing \emph{efficient} fair allocations. A common measure of efficiency in economics is \emph{Pareto-Optimality}\footnote{Defined in Section~\ref{introduction}}. Caragiannis et al. ~\cite{CaragiannisKMP016} showed that any allocation that maximizes the Nash welfare is guaranteed to be Pareto-optimal (efficient) and EF1 (fair). Hence the Nash welfare in itself also a good measure of efficiency of fair allocations. Unfortunately, finding an allocation with the maximum Nash welfare is APX-hard~\cite{Lee17}. However, approximation algorithms for Nash welfare under different types of valuations have received significant attention recently, e.g.,~\cite{ColeG18,ColeDGJMVY17,AnariGSS17,GargHM18,AnariMGV18,BKV18,ChaudhuryCGGHM18,GargKK20}. 

\subsection{Independent Work}
Independently of our work, Barman et al.~\cite{BarmanBKS'20} also find an $\mathcal{O}(n)$-approximation algorithm for the generalized $p$-mean welfare when agents have subadditive valuations. On a high level, both algorithms, first carefully allocate a single \emph{highly valuable} good to each agent and then carefully allocate the remaining goods. However, the procedures used to determine the initial (the single highly valuable good allocation) and the final allocations are significantly different. Also, contrary to the allocation determined by the algorithm in~\cite{BarmanBKS'20}, we are able to obtain guarantees on the fairness of our allocation, namely the properties of EF1 and the two relaxations of EFX. 

In the same paper, Barman et al.~\cite{BarmanBKS'20} show that it requires an exponential number of value queries to provide any sublinear approximation for the generalized $p$-mean welfare under subadditive valuations. Therefore, in polynomial time, our algorithm yields an allocation that satisfies the best relaxations of EFX known for subadditive valuations, and also achieves the best approximation for the generalized $p$-mean welfare possible in polynomial time (assuming $\mathbf{P} \neq \mathbf{NP}$).
\section{Preliminaries}
\label{preliminaries}
Any discrete fair division instance $I$ is a tuple $\langle [n], M, \mathcal{V} \rangle$ comprising of a set of $n$ agents ($[n]$), a set of $m$ goods ($M$) and a set of valuation functions $\mathcal{V} =\left\{v_1,v_2, \dots ,v_n \right\}$. The valuation function $v_i: 2^M \rightarrow \mathbb{R}_{\geq 0}$ tries to capture agent $i$'s utility for each subset of goods. Throughout this paper we will be dealing with the case where all the valuation functions are 
\begin{itemize}
	\item \emph{normalized}: $v_i(\emptyset) = 0$ for all $i \in [n]$,
	\item \emph{monotone}: $v_i(A \cup \left\{g\right\}) \geq v_i(A)$ for all $i \in [n]$ and $A \subset M$, and 
	\item \emph{subadditive}: for any sets $A,B \subseteq M$ we have $v_i(A) + v_i(B) \geq v_i(A \cup B)$ for all $i \in [n]$.
\end{itemize}

For ease of notation we use $v_i(g)$ instead of $v_i(\left\{g\right\})$ and $v_i(A \setminus g )$ instead of $v_i(A \setminus \left\{g\right\})$

\paragraph{Generalized $p$-mean welfare:} Given an allocation $X$ the $p$-mean welfare of $X$ (parametrized by $p$) is defined by
\begin{align*}
M_p(X) &= \Big( \frac{1}{n} \sum_{i \in [n]} v_i(X_i)^p \Big)^{\tfrac{1}{p}}
\end{align*}

This captures a wide range of fairness and efficiency measures that have been used frequently in the literature: Nash welfare for $p=0$, max-min welfare (also known as the egalitarian welfare) for $p = -\infty$ and social welfare for $p=1$. Barman and Sundaram~\cite{BarmanRanjani_pmeans} also mention that,

\begin{quote}
	``\textit{generalized means with $p \in (-\infty,1]$ exactly constitute the family of welfare functions that satisfy the Pigou-Dalton transfer principle and a few other key axioms.}'' 
\end{quote}
In the same paper they show that when agents have identical valuations, there is an algorithm that provides an $\mathcal{O}(1)$ factor approximation to the $p$-mean welfare.

\paragraph{EFX Allocations and Relaxations:} EFX is arguably the strongest notion of fairness in the context of indivisible goods. Formally,

\begin{definition}
	An allocation $X = \langle X_1,X_2, \dots ,X_n \rangle$ is said to be an EFX allocation if for all pairs of agents $i$ and $j$, we have $v_i(X_i) \geq v_i(X_j \setminus g)$ for all $g \in X_j$. 
\end{definition} 

Although the existence of complete EFX allocations is not known yet, there have been results pertaining to the existence of certain relaxations of EFX. We state two major relaxations here. Plaut and Roughgarden~\cite{TimPlaut18} introduced the notion of \emph{approximate EFX} or equivalently $\alpha$-EFX: 

\begin{definition}
	An allocation $X$ is $\alpha$-EFX with $\alpha \in (0,1)$ if and only if for all pairs of agents $i$ and $j$ we have $v_i(X_i) \geq \alpha \cdot v_i(X_j \setminus g)$ for all $g \in X_j$. 
\end{definition}
Plaut and Roughgarden~\cite{TimPlaut18} showed that $\tfrac{1}{2}$-EFX allocations exist and can be computed in pseudo-polynomial time. With a very minor change in their algorithm\footnote{Just run the same algorithm replacing the violation condition from $\tfrac{1}{2}$-EFX to $(\tfrac{1}{2} - \varepsilon)$-EFX} we can obtain an $(\tfrac{1}{2}- \varepsilon)$-EFX allocation in polynomial time. 

Another relaxation introduced by Chaudhury et al.~\cite{CKMS20} is \emph{EFX with bounded charity}:
\begin{definition}
	\label{EFXboundedcharity}
	A partial allocation $X$ is an EFX allocation with bounded charity with the set of unallocated goods $P$ such that 
	\begin{itemize}
		\item $X$ is EFX,
		\item $\lvert P \rvert < n$, and 
		\item $v_i(X_i) \geq v_i(P)$ for all $i \in [n]$.
	\end{itemize}
\end{definition}

The updated version of the paper~\cite{CKMSarxiv} gives a polynomial time algorithm to determine $(1- \varepsilon)$-EFX allocation with bounded charity.\footnote{Just relax the first condition in Definition~\ref{EFXboundedcharity} to ``$X$ is $(1 - \varepsilon)$-EFX"} 

Throughout the paper we will outline algorithms that find allocations with high welfare and are flexible with the fairness that the allocations satisfy, i.e., we can get $(\tfrac{1}{2}-\varepsilon)$-EFX allocations with high welfare  or $(1-\varepsilon)$-EFX allocations with bounded charity and high welfare. Therefore we now introduce some common notation for ease of referring to both these relaxations of EFX at the same time: 

\begin{definition}
	An allocation $X$ is an $(\alpha,c)$-EFX allocation with $\alpha \in (0,1)$ and $c \in \left\{0,1\right\}$ if and only if 
	\begin{itemize}
		\item $X$ is $\alpha$-EFX and EF1,
		\item $\lvert P \rvert < n$, and 
		\item $v_i(X_i) \geq v_i(P)$ for all $i \in [n]$.
		\item $P = \emptyset$ if $c=1$.\footnote{$c$ serves as an indicator to whether the allocation is complete or not.}
	\end{itemize}
\end{definition} 

Therefore an $(\alpha,1)$-EFX allocation would refer to an $\alpha$-EFX (which is also EF1) allocation and a $(\alpha,0)$-EFX allocation would refer to an $\alpha$-EFX allocation with bounded charity. In particular we would only be interested in $((\tfrac{1}{2} - \varepsilon),1)$-EFX allocation and $(1-\varepsilon,0)$-EFX allocation.

Similarly, we also introduce the notion of an $(\alpha,c)$-EFX algorithm:

\begin{definition}
	An $(\alpha,c)$-EFX algorithm takes as input any partial $\alpha$-EFX allocation $X$ and outputs an $(\alpha,c)$-EFX allocation $Y$ as the final allocation such that 
	\begin{itemize}
		\item the valuation of every agent in the final valuation is at least as high as his valuation in the initial allocation, i.e., $v_i(Y_i) \geq v_i(X_i)$, and 
		\item if there exists an agent $i$ such that $\lvert Y_i \rvert =1$ and $Y_i \neq X_i$, then $v_i(Y_i) > v_i(X_i)$.
	\end{itemize}
\end{definition}

In particular an $(\alpha,c)$-EFX algorithm outputs a final $(\alpha,c)$-EFX allocation that preserves (if not improves) all the  welfare guarantees of the initial $\alpha$-EFX allocation. Both the existing algorithms for determining an $(\tfrac{1}{2} - \varepsilon,1)$-EFX  allocation (a trivial modification of the algorithm in ~\cite{TimPlaut18}) and $(1- \varepsilon,0)$-EFX ~\cite{CKMSarxiv} allocation are an $(\tfrac{1}{2}-\varepsilon,1)$-EFX algorithm and an $(1-\varepsilon,0)$-EFX algorithm respectively.

\section{Algorithm}

In this section, we show that we can determine an $(\alpha,c)$-EFX allocation with an $\mathcal{O}(n)$ approximation on the $p$-mean welfare. 
The algorithm is very simple and it has just two phases: In the first phase we try to determine a reasonable allocation of high valued goods (we call this allocation $Y$) and then in the second phase we just run an $(\alpha,c)$-EFX algorithm with the remaining set of goods (we call our final allocation $Z$).

\paragraph{Allocating the high valued goods $Y$:} We first formally define the notion of high valued goods for an agent: For each agent $i$ let us order the goods in $M$ as $\left\{ g^{i}_1,g^{i}_2, \dots , g^{i}_m  \right\}$ such that $v_i(g^{i}_1) \geq  v_i(g^{i}_2) \geq \dots \geq v_i(g^{i}_m)$. Let $H_i = \left\{ g^{i}_1,g^{i}_2, \dots , g^{i}_n \right\}$. We refer to $H_i$ to be the set of high valued goods for agent $i$. Also for each good $g^{i}_k$ and an agent $i$ we define $\ra_i(g^{i}_k) = k$. Notice that if for any agent $i$ $\ra_i(g) < \ra_i(g')$, then $v_i(g) \geq v_i(g')$. 

We now outline how we compute the initial allocation $Y$. We construct the complete bipartite graph $G = ([n] \cup M, [n] \times M)$ with the weight of the edge from agent $i$ to good $g$, $w_{ig}$ being 
\begin{itemize}
	\item $n \cdot v_i(g) +  v_i (M \setminus H_i) $ if $p = - \infty$,
	\item $\log \Big(n \cdot  v_i(g) + v_i(M \setminus H_i) \Big)$ if $p=0$ and  
	\item $\Big(n \cdot v_i(g) + v_i (M \setminus H_i) \Big)^p$ otherwise.
\end{itemize}

Depending on the value of $p$ we choose an appropriate matching mechanism to determine $Y$: $Y$ is determined such that $\cup_{i \in [n]}(i,Y_i)$ is 
\begin{itemize}
	\item a maximum weight matching in $G$   if $p \geq 0$,
	\item a minimum weight perfect matching in $G$ if $p <0 $ and $p \neq -\infty$,
	\item a max-min matching\footnote{This is a matching that maximizes the weight of the smallest edge in the matching.} in $G$ if $p = -\infty$.
\end{itemize}

Let $Y$ be the allocation outputed by the corresponding matching subroutine. Also let $\mathbf{Y} = \cup_{i \in [n]} Y_i$. We modify the allocation $Y$ slightly such that $\cup_{i \in [n]}(i,Y_i)$ still remains the optimum matching, but no agent will prefer a good outside $\mathbf{Y}$ to the good allocated to him in $Y$ ($Y_i$), i.e., we wish to determine an allocation $Y$ such that for all agent $i \in [n]$ and all $g \notin \mathbf{Y}$ we have that $\ra_i(Y_i) < \ra_i(g)$. To achieve this, as long as there is an agent $i \in [n]$ and a good $g \notin \mathbf{Y}$ such that $\ra_i(g) < \ra_i(Y_i)$ we set $Y_i \gets \left\{g\right\}$. Note that such an operation does not decrease the optimum value of the matching: $v_{i}(g) \geq v_{i}(Y_i)$ (as $\ra_i(g) < \ra_i(Y_i)$) and hence $w_{ig} \geq w_{iY_i}$ for $p = -\infty$ and $p \in [0,1]$, while $w_{ig} \leq w_{iY_i}$ for $p <0$ and $p \neq -\infty$. This implies that the objective value of the matching does not decrease when $p \in [0,1]$ and $p = -\infty$ and the objective value of the matching does not increase when $p < 0$ and $p \neq -\infty$. Therefore, $\cup_{i \in [n]} (i,Y_i)$ still stays an optimum matching, but $\sum_{i \in [n]} \ra_i(Y_i)$ strictly decreases. Since $n \leq \sum_{i \in [n]} \ra_i(Y_i) \leq nm$, after $\mathcal{O}(nm)$ iterations we will have an allocation $Y$ such that $\cup_{i \in [n]}(i,Y_i)$ is still an optimum matching, but for all agents $i \in [n]$ and for all goods $g \notin \mathbf{Y}$ we have $\ra_i(Y_i) < \ra_i(g)$.

The complete algorithm is outlined in Algorithm~\ref{algorithmpmeans} (Selection of the allocation $Y$ is captured in steps 1 to 5).

\begin{lemma}\label{Yisheavy}
	For all $i \in [n]$ we have $Y_i \subset H_i$. Furthermore, for all 
	\begin{itemize}
		\item $g \notin H_i$, and
		\item $g \notin \mathbf{Y}$, 
	\end{itemize}
	we have $v_i(Y_i) \geq v_i(g)$.
\end{lemma}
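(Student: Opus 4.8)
First I would prove the inclusion $Y_i \subseteq H_i$. The allocation $Y$ is a matching, so $\lvert Y_i \rvert = 1$ for every agent $i$; write $Y_i = \{y_i\}$. Suppose toward contradiction that $y_i \notin H_i$ for some agent $i$. By the definition of $H_i$ as the $n$ goods of highest value to $i$, the good $y_i$ has rank $\ra_i(y_i) > n$, which means there are at least $n$ goods that $i$ weakly prefers to $y_i$, namely all of $H_i$. The other $n-1$ agents occupy at most $n-1$ goods in $\mathbf{Y}$, so among the $n$ goods of $H_i$ at least one good $g$ lies outside $\mathbf{Y}$; this good satisfies $\ra_i(g) \le n < \ra_i(y_i)$. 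But the post-processing step of the algorithm guarantees that no such pair $(i, g)$ with $g \notin \mathbf{Y}$ and $\ra_i(g) < \ra_i(Y_i)$ survives — otherwise we would have reassigned $Y_i \gets \{g\}$. This is the contradiction, so $Y_i \subseteq H_i$.

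For the second part, fix an agent $i$ and a good $g$ with $g \notin H_i$ and $g \notin \mathbf{Y}$. The second bullet, $g \notin \mathbf{Y}$, together with the termination condition of the same post-processing loop, gives $\ra_i(Y_i) < \ra_i(g)$. By the monotonicity of value with respect to rank noted in the text (if $\ra_i(a) < \ra_i(b)$ then $v_i(a) \ge v_i(b)$), we conclude $v_i(Y_i) = v_i(y_i) \ge v_i(g)$, which is exactly the claimed inequality.

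I expect the only genuine obstacle to be stating cleanly \emph{why the post-processing loop actually terminates with the stated property} — but this is already argued in the excerpt (the potential $\sum_i \ra_i(Y_i)$ strictly decreases at each step, is bounded below by $n$, so after $\bigO{nm}$ steps no violating pair remains), so I would simply invoke that. Everything else is a counting argument (pigeonhole on the $n-1$ other agents versus the $n$ goods of $H_i$) plus the rank-value monotonicity, both routine. One small point to be careful about: the hypothesis $g \notin H_i$ is not strictly needed for the inequality $v_i(Y_i) \ge v_i(g)$ once we know $g \notin \mathbf{Y}$, since $Y_i \subseteq H_i$ forces $\ra_i(Y_i) \le n$; but including it does no harm and keeps the statement in the form in which it will be applied later.
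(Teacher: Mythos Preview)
Your argument is correct and follows essentially the same route as the paper: a pigeonhole argument (if $Y_i\not\subset H_i$ then $|H_i\cap\mathbf{Y}|\le n-1<|H_i|$, so some $g\in H_i\setminus\mathbf{Y}$ exists with $\ra_i(g)<\ra_i(Y_i)$, contradicting the termination of the post-processing loop), followed by the rank--value monotonicity for the inequality.

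One small remark on reading the statement: the paper treats the two bullet points as \emph{two separate} claims rather than as a conjunction on $g$ --- later the lemma is invoked once with only ``$g\notin H_i$'' (in the proof of Lemma~\ref{lowerboundfinal}) and once with only ``$g\notin\mathbf{Y}$'' (in Observation~\ref{finalsingletons}). Your final paragraph already observes that each hypothesis alone suffices (the first via $Y_i\subseteq H_i$, the second via the loop's termination condition), so you have both claims; just frame them as such rather than as a single claim about $g$ satisfying both conditions.
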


\begin{proof}
	We first show that $Y_i \subset H_i$. We prove the same by contradiction. Assume otherwise, i.e., $Y_i \not \subset H_i$. In that case note that there is always a good $g \in H_i \setminus \mathbf{Y}$ (as $\lvert H_i \rvert = \lvert \mathbf{Y} \rvert =n$ and there is a good in $\mathbf{Y}$ (namely $Y_i$) which is not in $H_i$). By the definition of $H_i$ it is clear that $\ra_i(g) < \ra_i(g')$ for all $g' \notin H_i$. Thus we have $\ra_i(g) < \ra_i(Y_i)$ when $g \notin \mathbf{Y}$, which is a contradiction. Therefore $Y_i \subset H_i$. This also immediately shows that for all $g \notin H_i$ we have $v_i(g) \leq v_i(Y_i)$ (as $Y_i \subset H_i$ and any good in $H_i$ is at least as valuable as any good outside $H_i$ to agent $i$).

	The proof of the last statement of the lemma is immediate. We have that $\ra_i(Y_i) < \ra_i(g)$ for all $g \notin \mathbf{Y}$, immediately implying that $v_i(Y_i) \geq v_i(g)$.
\end{proof}

\paragraph{Run $(\alpha,c)$-EFX algorithm on the remaining goods:} Once we determined the initial allocation $Y$, we run an $(\alpha,c)$-EFX algorithm on the remaining goods starting with $Y$ as the initial allocation ($Y$ is a feasible initial allocation as it is trivially an $\alpha$-EFX allocation as every agent has exactly a single good). Let $Z$ be the final $(\alpha,c)$-EFX allocation. As mentioned earlier in Section~\ref{overview} the singleton sets allocated to the agents are the barriers to proving our desired approximation for any $(\alpha,c)$-EFX allocation. However since we started with a good initial allocation (namely $Y$), we first show that we have some nice properties about the singleton sets in the final allocation $Z$.

\begin{observation}
	\label{finalsingletons}
	If $ \lvert Z_{\ell} \rvert =1$ for any $\ell \in [n]$, then we have $Z_{\ell} \subset \mathbf{Y}$.
\end{observation}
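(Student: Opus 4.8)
The plan is to argue by cases on whether the singleton bundle $Z_\ell$ coincides with the good agent $\ell$ received in the initial allocation $Y$, and in the remaining case to derive a contradiction from a simple valuation comparison between $Y_\ell$ and $Z_\ell$.

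First I would write $Z_\ell = \{g\}$ and dispose of the trivial case $g = Y_\ell$: since $Y_\ell \in \mathbf{Y}$ by definition of $\mathbf{Y} = \cup_{i \in [n]} Y_i$, we immediately get $Z_\ell \subset \mathbf{Y}$. So from now on assume $Z_\ell \neq Y_\ell$. Next I would invoke the defining property of the $(\alpha,c)$-EFX algorithm used in the second phase. Recall that $Z$ is obtained by running such an algorithm with $Y$ as the initial partial $\alpha$-EFX allocation (a valid input, since every bundle of $Y$ is a singleton and hence $Y$ is trivially $\alpha$-EFX). Applying the second bullet of the definition of an $(\alpha,c)$-EFX algorithm with the definition's input allocation equal to our $Y$ and its output equal to our $Z$: since $\lvert Z_\ell \rvert = 1$ and $Z_\ell \neq Y_\ell$, we obtain the \emph{strict} inequality $v_\ell(Z_\ell) > v_\ell(Y_\ell)$, i.e. $v_\ell(g) > v_\ell(Y_\ell)$.

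Finally I would suppose toward a contradiction that $g \notin \mathbf{Y}$. By the construction of $Y$ in the first phase — after the clean-up iterations we have $\ra_\ell(Y_\ell) < \ra_\ell(g)$ for every $g \notin \mathbf{Y}$, which is exactly the content invoked in the proof of the last statement of Lemma~\ref{Yisheavy} — it follows that $v_\ell(Y_\ell) \geq v_\ell(g) = v_\ell(Z_\ell)$. This contradicts the strict inequality established in the previous paragraph. Hence $g \in \mathbf{Y}$, that is, $Z_\ell \subset \mathbf{Y}$.

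I do not expect any genuine obstacle here. The one point to be careful about is that the argument really needs the strict improvement guaranteed for newly-created singleton bundles: the weaker monotonicity guarantee $v_\ell(Z_\ell) \ge v_\ell(Y_\ell)$ alone would not close the argument, because the construction of $Y$ only yields the non-strict bound $v_\ell(Y_\ell) \ge v_\ell(g)$ for $g \notin \mathbf{Y}$. Correspondingly, one should also make sure the clean-up step in the first phase is what guarantees that every good outside $\mathbf{Y}$ is no more valuable to $\ell$ than $Y_\ell$ — this is precisely why that step is performed.
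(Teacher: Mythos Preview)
Your proposal is correct and follows essentially the same approach as the paper: both split into the cases $Z_\ell = Y_\ell$ and $Z_\ell \neq Y_\ell$, invoke the strict-improvement clause of the $(\alpha,c)$-EFX algorithm in the second case to get $v_\ell(Z_\ell) > v_\ell(Y_\ell)$, and then use Lemma~\ref{Yisheavy} (equivalently, the rank property from the clean-up step) to rule out $Z_\ell \not\subset \mathbf{Y}$. Your closing remark about why the \emph{strict} inequality is essential is exactly the right observation.
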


\begin{proof}
	Since $Z$ is obtained by running an $(\alpha,c)$-EFX algorithm starting with $Y$ as the initial allocation, we have for every agent $i$ that $v_i(Z_i) \geq v_i(Y_i)$ (from the definition of $(\alpha,c)$-EFX algorithm). Note that if for any agent $i$ we have $Z_i \neq Y_i$, and $\lvert Z_i \rvert = 1$, then $v_i(Z_i) > v_i(Y_i)$ (from the definition of $(\alpha,c)$-EFX algorithm). Now consider the agent $\ell$ such that $\lvert Z_{\ell} \rvert =1$. If $Z_{\ell} = Y_{\ell}$, then we immediately have $Z_{\ell} \subset \mathbf{Y}$. So now consider the case when $Z_{\ell} \neq Y_{\ell}$. Then we have $v_{\ell}(Z_{\ell}) > v_{\ell}(Y_{\ell})$. By Lemma~\ref{Yisheavy} we know that no good outside $\mathbf{Y}$ can be more valuable to agent $\ell$ than $Y_{\ell}$. Therefore $Z_{\ell} \subset \mathbf{Y}$. 
\end{proof}

Now we show a lower bound on the final valuation of an agent in terms of the low valued goods.

\begin{observation}
	\label{smallgoodsbound1}
	We have $v_i(Z_i) \geq \frac{\alpha v_i( M \setminus \Y)}{2(n+1)}$ for all $i \in [n]$.
\end{observation}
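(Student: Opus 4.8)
The plan is to mimic the technical‑overview computation, but applied to the \emph{final} allocation $Z$ and with the generic factor $\alpha$ in place of $\tfrac12$. First I would split the bundles of $Z$ according to size: let $N=\{\,j\in[n] : \lvert Z_j\rvert\ge 2\,\}$. Fix an agent $i$ and take any $j\in N$. Since $Z$ is $\alpha$-EFX, $v_i(Z_i)\ge \alpha\, v_i(Z_j\setminus g)$ for every $g\in Z_j$; and picking any $g\in Z_j$ together with some other $g'\in Z_j$ (which exists because $\lvert Z_j\rvert\ge 2$), monotonicity gives $v_i(Z_j\setminus g')\ge v_i(g)$, so $v_i(Z_i)\ge\alpha\, v_i(g)$ as well. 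Adding the two bounds and using subadditivity yields $2\,v_i(Z_i)\ge\alpha\big(v_i(Z_j\setminus g)+v_i(g)\big)\ge\alpha\, v_i(Z_j)$, i.e. $v_i(Z_i)\ge\tfrac{\alpha}{2}\,v_i(Z_j)$ for every $j\in N$.

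Next I would sum over $j\in N$. Since $\lvert N\rvert\le n$ and, by subadditivity, $\sum_{j\in N}v_i(Z_j)\ge v_i\big(\bigcup_{j\in N}Z_j\big)$, this gives $n\,v_i(Z_i)\ge\tfrac{\alpha}{2}\,v_i\big(\bigcup_{j\in N}Z_j\big)$, i.e. $v_i\big(\bigcup_{j\in N}Z_j\big)\le\tfrac{2n}{\alpha}\,v_i(Z_i)$ (vacuously true if $N=\emptyset$, the empty union having value $0$). It remains to relate $\bigcup_{j\in N}Z_j$ to $M\setminus\Y$: every good outside $\bigcup_{j\in N}Z_j$ lies either in a singleton (or empty) bundle $Z_j$ with $j\notin N$, or in the charity set $P$, and by Observation~\ref{finalsingletons} all the former goods lie in $\Y$. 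Hence $M\setminus\Y\subseteq\big(\bigcup_{j\in N}Z_j\big)\cup P$, and subadditivity together with the charity guarantee $v_i(Z_i)\ge v_i(P)$ (which holds whether $c=0$ or $c=1$, since $P=\emptyset$ in the latter case) give $v_i(M\setminus\Y)\le v_i\big(\bigcup_{j\in N}Z_j\big)+v_i(P)\le\tfrac{2n}{\alpha}\,v_i(Z_i)+v_i(Z_i)=\tfrac{2n+\alpha}{\alpha}\,v_i(Z_i)$. Since $\alpha\le 1$ we have $2n+\alpha\le 2(n+1)$, so $v_i(Z_i)\ge\tfrac{\alpha}{2(n+1)}\,v_i(M\setminus\Y)$, as claimed.

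I do not anticipate a serious obstacle; the calculation is essentially the one sketched in Section~\ref{overview}. The only point that needs care is the bookkeeping of which goods may fail to appear in a non‑singleton bundle — this is exactly where Observation~\ref{finalsingletons} is invoked, to place all such goods (apart from those in $P$) inside $\Y$ — together with handling the charity set $P$ uniformly over the cases $c=0$ and $c=1$ via the inequality $v_i(Z_i)\ge v_i(P)$.
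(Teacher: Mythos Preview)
Your proof is correct and follows essentially the same approach as the paper: both arguments establish $v_i(Z_i)\ge\tfrac{\alpha}{2}v_i(Z_j)$ for non-singleton bundles via $\alpha$-EFX and subadditivity, invoke Observation~\ref{finalsingletons} to place all singleton goods inside $\Y$, use the charity bound $v_i(Z_i)\ge v_i(P)$, and combine by subadditivity. The only cosmetic difference is in the bookkeeping of the final count---the paper sums over the $n+1-\lvert S\rvert$ non-singleton-plus-charity terms directly, while you bound $\lvert N\rvert\le n$ and then absorb the extra $+v_i(P)$ term via $\alpha\le 1$---but the two routes are equivalent.
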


\begin{proof}
	Fix an agent $i$. Now consider any agent $j$ such that $Z_j$ is not a singleton. Since $Z$ is an $\alpha$-EFX allocation, we have that $v_i(Z_i) \geq \alpha \cdot v_i(Z_j \setminus g)$ for all $g \in Z_j$. Since $\lvert Z_j \rvert  \geq 2$ we can say that $v_i(Z_i) \geq \alpha \cdot \mathit{max}(v_i(Z_j \setminus g),v_i(g))$. Therefore we have,
	
	\begin{align*}
	v_i(Z_i) &\geq \frac{\alpha \cdot (v_i(Z_j \setminus g) + v_i(g))}{2}\\
	&\geq \frac{\alpha \cdot v_i(Z_j)}{2} &(\text{ by subadditivity})
	\end{align*}
	
	Let $S = \cup_{ \lvert Z_{\ell} \rvert =1}Z_{\ell}$. By Observation~\ref{finalsingletons} we know that $S \subseteq \mathbf{Y}$.  Note that even if $Z$ is a partial allocation (if $c=0$ in the $(\alpha,c)$-EFX allocation $Z$) and there exists a set of goods $P$ unallocated, we have $v_i(Z_i) \geq v_i(P)$ (since $Z$ is an $(\alpha,c)$-EFX allocation) . Therefore we have, 
	
	\begin{align*}
	(n+1-\lvert S \rvert)v_i(Z_i) &\geq \tfrac{\alpha}{2} \sum_{ \lvert Z_j \rvert \geq 2} v_i(Z_j) + v_i(P)\\
	&\geq \tfrac{\alpha}{2} v_i \Big(\bigcup_{ \lvert Z_j \rvert \geq 2}  Z_j \cup P \Big) &(\text{ by subadditivity})\\
	&= \tfrac{\alpha}{2}v_i(M \setminus S)\\
	&\geq \tfrac{\alpha}{2}v_i(M \setminus Y) &(\text{ since } S \subseteq \Y) 
	\end{align*}
	
	Therefore we have $v_i(Z_i) \geq \tfrac{\alpha}{2(n+1-\lvert S \rvert)}v_i(M \setminus \Y) \geq \tfrac{\alpha}{2(n+1)}v_i(M \setminus \Y)$. 
\end{proof}

Now we prove a lower bound on $v_i(Z_i)$ in terms of the initial allocation $Y$ and the set of low valuable goods for agent $i$, i.e., $M \setminus H_i$.

\begin{lemma}
	\label{lowerboundfinal}
	For all $i \in [n]$ we have  $v_i(Z_i) \geq \frac{\alpha}{4(n+1)} \cdot \bigg( n \cdot v_i(Y_i) +   v_i \Big( M \setminus H_i \Big) \bigg)$.
\end{lemma}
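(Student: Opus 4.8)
The plan is to combine two lower bounds on $v_i(Z_i)$ that are already in hand: the bound $v_i(Z_i) \geq \tfrac{\alpha}{2(n+1)}\,v_i(M \setminus \Y)$ from Observation~\ref{smallgoodsbound1}, and the bound $v_i(Z_i) \geq v_i(Y_i)$, which holds because $Z$ is obtained by running an $(\alpha,c)$-EFX algorithm on the initial allocation $Y$. The first bound is stated in terms of $M \setminus \Y$, whereas the lemma is phrased in terms of $M \setminus H_i$, so the first task is to pass between these two sets.

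The key intermediate claim is $v_i(M \setminus \Y) \geq v_i(M \setminus H_i) - n\cdot v_i(Y_i)$. To see it, observe that every good of $M \setminus H_i$ is either in $\Y$ or not, so $M \setminus H_i \subseteq (\Y \setminus H_i) \cup (M \setminus \Y)$; by monotonicity and subadditivity, $v_i(M \setminus H_i) \leq v_i(\Y \setminus H_i) + v_i(M \setminus \Y)$. Now every $g \in \Y \setminus H_i$ lies outside $H_i$, and since $Y_i \subseteq H_i$ (Lemma~\ref{Yisheavy}) while, by definition of $H_i$, no good outside $H_i$ is more valuable to $i$ than any good inside $H_i$, we get $v_i(g) \leq v_i(Y_i)$. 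As $|\Y \setminus H_i| \leq |\Y| = n$, repeated use of subadditivity gives $v_i(\Y \setminus H_i) \leq \sum_{g \in \Y \setminus H_i} v_i(g) \leq n\cdot v_i(Y_i)$, and substituting yields the claim.

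It then remains to assemble the final inequality by a convex combination. Writing $B := v_i(Y_i)$ and $C := v_i(M \setminus H_i)$, Observation~\ref{smallgoodsbound1} together with the claim gives $v_i(Z_i) \geq \tfrac{\alpha}{2(n+1)}(C - nB)$, and the algorithm property gives $v_i(Z_i) \geq B$. Decompose the target as
\begin{align*}
\frac{\alpha}{4(n+1)}\bigl(nB + C\bigr) \;=\; \frac12\cdot\frac{\alpha}{2(n+1)}\bigl(C - nB\bigr) \;+\; \frac12\cdot\frac{\alpha n}{n+1}\,B .
\end{align*}
The first summand is at most $\tfrac12 v_i(Z_i)$ (using the first bound when $C - nB \geq 0$, and trivially when $C - nB < 0$ since the summand is then negative), and the second summand is at most $\tfrac12 B \leq \tfrac12 v_i(Z_i)$ because $\alpha \leq 1$ forces $\tfrac{\alpha n}{n+1} \leq 1$. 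Adding the two estimates gives $v_i(Z_i) \geq \tfrac{\alpha}{4(n+1)}(nB + C)$, which is exactly the statement.

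I expect the only genuine subtlety to be the intermediate claim, specifically the estimate $v_i(\Y \setminus H_i) \leq n\cdot v_i(Y_i)$: this is where the structural content of Lemma~\ref{Yisheavy} is actually used, as one must control the goods that agent $i$ does not regard as high-valued but that were nevertheless handed out (to other agents) in the initial matching $Y$. Once the factor of $n$ in front of $v_i(Y_i)$ is correctly accounted for, the concluding convex-combination bookkeeping is routine.
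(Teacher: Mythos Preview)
Your proof is correct and follows essentially the same approach as the paper: both combine the two bounds $v_i(Z_i)\geq v_i(Y_i)$ and $v_i(Z_i)\geq \tfrac{\alpha}{2(n+1)}v_i(M\setminus\Y)$, pass from $M\setminus\Y$ to $M\setminus H_i$ via the estimate $v_i(\Y\setminus H_i)\leq n\cdot v_i(Y_i)$ (using Lemma~\ref{Yisheavy}), and then do a convex-combination/averaging step using $\alpha\leq 1$. The only cosmetic differences are that the paper averages first and manipulates afterwards, while you decompose the target algebraically and bound each summand; your bookkeeping is arguably a bit cleaner.
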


\begin{proof}
	We have $v_i(Z_i) \geq v_i(Y_i)$ (since $Z$ is an allocation determined by an $(\alpha,c)$-EFX algorithm with $Y$ as the initial allocation) and from Observation~\ref{smallgoodsbound1} we have $v_i(Z_i) \geq \frac{\alpha v_i( M \setminus \Y)}{2(n+1)}$. Therefore for all $i \in [n]$ we have 
	
	\begin{align*}
	v_i(Z_i) &\geq \frac{1}{2} \cdot \bigg(v_i(Y_i) + \frac{\alpha}{2(n+1)} \cdot v_i( M \setminus \Y) \bigg)\\
	&= \frac{1}{2} \cdot\bigg(v_i(Y_i) + \frac{\alpha}{2(n+1)} \cdot v_i \Big( \big(M \setminus (\Y \cap H_i) \big) \setminus (\Y \setminus H_i) \Big) \bigg)\\
	&\geq\frac{1}{2} \cdot \bigg(v_i(Y_i) + \frac{\alpha}{2(n+1)} \cdot v_i \Big( M \setminus (\Y \cap H_i) \Big) - \frac{\alpha}{2(n+1)} \cdot v_i \Big( \Y \setminus H_i \Big) \bigg) &\text{(by subadditivity)}\\
	&\geq \frac{1}{2} \cdot \bigg(v_i(Y_i) + \frac{\alpha}{2(n+1)} \cdot v_i \Big( M \setminus H_i \Big) - \frac{\alpha}{2(n+1)} \cdot v_i \Big( \Y \setminus H_i \Big) \bigg) &(\text{as } \Y \cap H_i \subseteq H_i) \numberthis \label{boundonZ}
	\end{align*}
	
	By Lemma~\ref{Yisheavy} we know that $v_i(Y_i) \geq v_i(g)$ for all $g \notin H_i$. In particular $v_i(Y_i) \geq v_i(g)$ for all $g \in \Y \setminus H_i$. Thus,
	\begin{align*}
	v_i( \Y \setminus H_i) & \leq  \sum_{i \in \Y \setminus H_i} v_i(g) &(\text{by subadditivity})\\
	& \leq \sum_{i \in \Y \setminus H_i} v_i(Y_i)\\
	& =\lvert \Y \setminus H_i \rvert \cdot v_i(Y_i)\\
	& \leq n \cdot v_i(Y_i) &(\text{as } \lvert \Y \rvert = n)\\ 
	&\leq (n+1) \cdot v_i(Y_i)     
	\end{align*}  
	
	Substituting the upper bound for $v_i(\Y \setminus H_i)$ in (\ref{boundonZ}) we have 
	
	\begin{align*}
	v_i(Z_i) &\geq \frac{1}{2} \cdot \bigg((1 - \frac{\alpha}{2}) \cdot v_i(Y_i) + \frac{\alpha}{2(n+1)} \cdot v_i \Big( M \setminus H_i \Big) \bigg)\\
	&\geq \frac{1}{2} \cdot \bigg( \frac{1}{2} \cdot v_i(Y_i) + \frac{\alpha}{2(n+1)} \cdot v_i \Big( M \setminus H_i \Big) \bigg)  &(\text{as } \alpha \leq 1)\\  
	&= \frac{\alpha}{4(n+1)} \cdot \bigg( (n+1) \cdot \frac{1}{\alpha} \cdot v_i(Y_i) +   v_i \Big( M \setminus H_i \Big) \bigg)\\
	&\geq \frac{\alpha}{4(n+1)} \cdot \bigg( n \cdot v_i(Y_i) +   v_i \Big( M \setminus H_i \Big) \bigg)  \qedhere
	\end{align*}
\end{proof}

\begin{algorithm}[!t]
	\caption {Determining an $(\alpha,c)$-EFX allocation with ${\mathcal{O}}(n)$ approximation on optimum $p$-mean.}
	\label{algorithmpmeans}
	\begin{algorithmic}[1]
		\State \textbf{Construct} $G = \langle[n] \cup M, [n] \times M \rangle$ with \[
		w_{ig}=
		\begin{cases}
		n \cdot v_i(g) + v_i(M \setminus H_i) & \text{if $p = - \infty$} \\
		\log \Big(n \cdot v_i(g) + v_i(M \setminus H_i) \Big) &\text{if $p = 0$}\\
		\Big(n \cdot v_i(g) + v_i(M \setminus H_i)\Big)^p & \text{otherwise} 
		\end{cases}
		\] 
		\State \textbf{Set} $Y$ such that \[
		\cup_{i \in [n]} (i,Y_i)=
		\begin{cases}
		\text{\textbf{Max-Min-Matching}$(G)$} & \text{if $p = - \infty$} \\
		\text{\textbf{Min-Weight-Perfect-Matching}$(G)$ } & \text{if $p < 0$ and $p \neq -\infty$} \\
		\text{\textbf{Max-Weight-Matching}$(G)$} & \text{otherwise} \\ 
		\end{cases}
		\] 
		\While{$\exists i \in [n]$ and $\exists g \notin \mathbf{Y}$ such that $\ra_i(g) < \ra_i(Y_i)$}
		\State $Y_i \gets \left\{g\right\}$.
		\EndWhile   
		
		\State \textbf{Set} $Z \gets \mathbf{(\alpha,c)}$-\textbf{EFX}$\Big(\langle Y_1, \dots ,Y_n \rangle, \big( M \setminus \cup_{i \in [n]} Y_i \big) \Big)$
	\end{algorithmic}
\end{algorithm}

The final allocation is the set $Z$ which is obtained by running an $(\alpha,c)$-EFX allocation starting with $Y$ as the initial allocation. Therefore our final allocation is an $(\alpha,c)$-EFX allocation. We would now show the approximation guarantees that the algorithm achieves. The sections that follow prove that the allocation $Z$ has a $p$-welfare that is $\tfrac{\alpha}{4(n+1)}$ times  $p$-mean welfare achieved by the optimal allocation. Each section from here on presents the proof for particular value or a range of values of  $p$.

\subsection{Case $p = -\infty$}
This is the case where $M_p(X) = \mathit{min}_{i \in [n]} v_i(X_i)$. Let $X^*$ be the allocation with the highest $p$-mean value and let $g^*_i$ be agent $i$'s most valuable good in $X^*_i$. We will show in this section that $M_p(Z) \geq \frac{\alpha}{4(n+1)} \cdot M_p(X^*)$. First observe that by Lemma~\ref{lowerboundfinal}, we have that for all $i \in [n]$, $v_i(Z_i) \geq \frac{\alpha}{4(n+1)} \cdot \bigg( n \cdot v_i(Y_i) +   v_i \Big( M \setminus H_i \Big) \bigg)$. Therefore,

\begin{align*}
\mathit{min}_{i \in [n]}v_i(Z_i) &\geq  \mathit{min}_{i \in [n]} \frac{\alpha}{4(n+1)} \cdot \bigg( n \cdot v_i(Y_i) +   v_i \Big( M \setminus H_i \Big) \bigg)
\end{align*}

Recall that  $Y$ was chosen such that $(i,Y_i)$ is a maximum weight matching in the bipartite graph $G = ([n] \cup M, [n] \times M)$ where the weight of an edge from agent $i$ to good $g$, $w_{ig} = n \cdot v_i(g) + v_i(M \setminus H_i)$. Also note that $\cup_{i \in [n]} (i,g^*_i)$ is a feasible matching in $G$. Thus we have 

\begin{align*}
\mathit{min}_{i \in [n]} \Big( n \cdot v_i(Y_i) +   v_i (M \setminus H_i) \Big) &\geq \mathit{min}_{i \in [n]} \Big( n \cdot v_i(g^*_i) +   v_i (M \setminus H_i) \Big)
\end{align*}

Therefore we have,

\begin{align*}
\mathit{min}_{i \in [n]}v_i(Z_i) &\geq  \mathit{min}_{i \in [n]} \frac{\alpha}{4(n+1)} \cdot \Big( n \cdot v_i(Y_i) +   v_i ( M \setminus H_i ) \Big)\\
&\geq  \frac{\alpha}{4(n+1)} \cdot \mathit{min}_{i \in [n]} \Big( n \cdot v_i(g^*_i) +   v_i (M \setminus H_i) \Big)\\
&\geq  \frac{\alpha}{4(n+1)} \cdot \mathit{min}_{i \in [n]} \bigg( n \cdot v_i(g^*_i) +   v_i \Big(X^*_i \cap (M \setminus H_i) \Big) \bigg)\\
&\geq  \frac{\alpha}{4(n+1)} \cdot \mathit{min}_{i \in [n]} \bigg( v_i(X^*_i \cap H_i) +   v_i \Big(X^*_i \cap (M \setminus H_i) \Big) \bigg) &(\text{as } \lvert H_i \rvert =n)\\
&\geq  \frac{\alpha}{4(n+1)} \cdot \mathit{min}_{i \in [n]} v_i(X^*_i) &\text{(by subadditivity)}
\end{align*}

This shows that $M_p(Z) \geq \frac{\alpha}{4(n+1)} \cdot M_p(X^*)$ when $p = -\infty$.

\subsection{Case $p < 0$ and $p \neq -\infty$}
\label{negativep}
The proof in this section is very similar to the proof when $p=-\infty$. Still for completeness we sketch the whole proof. Let $X^*$ be the allocation with the highest $p$-mean value and let $g^*_i$ be agent $i$'s most valuable good in $X^*_i$. Similar to the case $p = - \infty$, will show in this section that $M_p(Z) \geq \frac{\alpha}{4(n+1)} \cdot M_p(X^*)$. We now define 

\begin{align*}
R(Z)&= \sum_{i \in [n]} v_i(Z_i)^p 
\end{align*}

Note that $M_p(Z) = \Big( \tfrac{1}{n} \cdot  R(Z) \Big)^{\tfrac{1}{p}}$. We now prove an upper bound on $R(Z)$.

\begin{lemma}
	\label{boundonR}
	We have $R(Z) \leq \frac{\alpha^p}{\big( 4(n+1) \big)^p} \cdot \Big( \sum_{i \in [n]} v_i(X^*_i)^p \Big)$.
\end{lemma}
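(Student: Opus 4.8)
\textbf{Proof proposal for Lemma~\ref{boundonR}.}

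The plan is to start from the per-agent bound of Lemma~\ref{lowerboundfinal}, raise it to the power $p$, and sum over all agents. Since $p < 0$, the map $t \mapsto t^p$ is order-reversing on $\mathbb{R}_{> 0}$, so the inequality $v_i(Z_i) \geq \tfrac{\alpha}{4(n+1)}\big(n\cdot v_i(Y_i) + v_i(M\setminus H_i)\big)$ flips to
\[
v_i(Z_i)^p \;\leq\; \frac{\alpha^p}{\big(4(n+1)\big)^p}\cdot \Big(n\cdot v_i(Y_i) + v_i(M\setminus H_i)\Big)^p,
\]
and summing over $i\in[n]$ gives $R(Z) \leq \tfrac{\alpha^p}{(4(n+1))^p}\sum_{i\in[n]}\big(n\cdot v_i(Y_i) + v_i(M\setminus H_i)\big)^p$. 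It therefore suffices to prove
\[
\sum_{i\in[n]} \Big(n\cdot v_i(Y_i) + v_i(M\setminus H_i)\Big)^p \;\leq\; \sum_{i\in[n]} v_i(X^*_i)^p.
\]

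For this, recall that $w_{ig} = \big(n\cdot v_i(g) + v_i(M\setminus H_i)\big)^p$ and that $\cup_{i\in[n]}(i,Y_i)$ was chosen to be a \emph{minimum}-weight perfect matching in $G$ (this is the relevant case since $p < 0$, $p\neq -\infty$). Now consider the matching $\cup_{i\in[n]}(i,g^*_i)$, where $g^*_i$ is agent $i$'s most valuable good in $X^*_i$; this is a feasible perfect matching in $G$ (the $g^*_i$ are distinct because the $X^*_i$ are disjoint, and each is matched to exactly one agent). By optimality of $Y$,
\[
\sum_{i\in[n]} \Big(n\cdot v_i(Y_i) + v_i(M\setminus H_i)\Big)^p \;\leq\; \sum_{i\in[n]} \Big(n\cdot v_i(g^*_i) + v_i(M\setminus H_i)\Big)^p.
\]
It remains to bound the right-hand side by $\sum_i v_i(X^*_i)^p$. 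Here I would argue term by term: since $\lvert H_i\rvert = n$, we have $n\cdot v_i(g^*_i) \geq v_i(H_i) \geq v_i(X^*_i \cap H_i)$ by subadditivity and monotonicity (each good in $H_i$ is at least as valuable to $i$ as $g^*_i$ would require care — rather, $g^*_i$ is $i$'s top good in $X^*_i$, and $H_i$ consists of $i$'s top $n$ goods overall, so $v_i(g^*_i) \geq v_i(g)$ for the $n$-th ranked good, giving $n\cdot v_i(g^*_i)\geq v_i(H_i)\geq v_i(X^*_i\cap H_i)$). Also $v_i(M\setminus H_i)\geq v_i(X^*_i\cap(M\setminus H_i))$ by monotonicity. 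Hence $n\cdot v_i(g^*_i) + v_i(M\setminus H_i) \geq v_i(X^*_i\cap H_i) + v_i(X^*_i\cap(M\setminus H_i)) \geq v_i(X^*_i)$ by subadditivity. Applying $t\mapsto t^p$ (order-reversing, $p<0$) term by term and summing yields $\sum_i \big(n\cdot v_i(g^*_i)+v_i(M\setminus H_i)\big)^p \leq \sum_i v_i(X^*_i)^p$, which chains with the previous displays to complete the proof.

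\textbf{Main obstacle.} The one genuinely delicate point is the direction-of-inequality bookkeeping caused by $p<0$: the matching is chosen to \emph{minimize} $\sum w_{ig}$ precisely because $t\mapsto t^p$ reverses order, so ``large valuation'' corresponds to ``small weight'', and one must check that the feasible comparison matching $\cup_i(i,g^*_i)$ is used in the correct direction and that the term-by-term valuation inequality $n\cdot v_i(g^*_i)+v_i(M\setminus H_i)\geq v_i(X^*_i)$ survives the exponentiation. A secondary subtlety worth a sentence is a degeneracy check: if some $v_i(X^*_i)=0$ then $M_p(X^*)=0$ and the statement is vacuous or trivial, so one may assume all valuations are positive; and one should confirm $g^*_i$ exists, i.e.\ $X^*_i\neq\emptyset$, which again holds whenever $M_p(X^*)>0$. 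Everything else is a routine application of subadditivity, monotonicity, $\lvert H_i\rvert=n$, and the optimality of the matching.
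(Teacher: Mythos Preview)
Your overall approach is exactly the paper's: invoke Lemma~\ref{lowerboundfinal}, flip the inequality using $p<0$, compare the matching $\cup_i(i,Y_i)$ to the feasible matching $\cup_i(i,g^*_i)$, and then bound $n\cdot v_i(g^*_i)+v_i(M\setminus H_i)\geq v_i(X^*_i)$ term by term. The chain is correct and the handling of the order-reversal is fine.

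The one genuine gap is your justification of $n\cdot v_i(g^*_i)\geq v_i(X^*_i\cap H_i)$. You route this through the intermediate claim $n\cdot v_i(g^*_i)\geq v_i(H_i)$, arguing that $v_i(g^*_i)$ dominates the $n$-th ranked good. Both the intermediate claim and its justification are false in general: $g^*_i$ is only the best good \emph{in $X^*_i$}, and $X^*_i$ need not contain any good from $H_i$ at all, so $v_i(g^*_i)$ can be strictly smaller than $v_i(g^i_n)$, and certainly $n\cdot v_i(g^*_i)$ can be far below $v_i(H_i)$ (take $n=2$, $v_i(g^i_1)=100$, $v_i(g^i_2)=1$, and $X^*_i=\{g\}$ with $v_i(g)=\tfrac12$). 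The correct argument bypasses $v_i(H_i)$ entirely: since $X^*_i\cap H_i\subseteq X^*_i$, every good $g\in X^*_i\cap H_i$ satisfies $v_i(g)\leq v_i(g^*_i)$; and $\lvert X^*_i\cap H_i\rvert\leq\lvert H_i\rvert=n$. Subadditivity then gives
\[
v_i(X^*_i\cap H_i)\;\leq\;\sum_{g\in X^*_i\cap H_i} v_i(g)\;\leq\;\lvert X^*_i\cap H_i\rvert\cdot v_i(g^*_i)\;\leq\;n\cdot v_i(g^*_i),
\]
which is precisely what the paper means by the annotation ``as $\lvert H_i\rvert=n$''. With this fix your proof is complete and matches the paper's.
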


By Lemma~\ref{lowerboundfinal}, we have that for all $i \in [n]$, $v_i(Z_i) \geq \frac{\alpha}{4(n+1)} \cdot \bigg( n \cdot v_i(Y_i) +   v_i \Big( M \setminus H_i \Big) \bigg)$. Therefore,

\begin{align*}
R(Z) &\leq \sum_{i \in [n]} \bigg(\frac{\alpha}{4(n+1)} \cdot \Big( n \cdot v_i(Y_i) +  v_i(M \setminus H_i) \Big) \bigg)^p &(\text{as $p$ is negative})
\end{align*}

Recall that  $Y$ was chosen such that $(i,Y_i)$ is a minimum  weight perfect matching in the bipartite graph $G = ([n] \cup M, [n] \times M)$ where the weight of an edge from agent $i$ to good $g$, $w_{ig} =  \Big(n \cdot v_i(g) + v_i(M \setminus H_i) \Big)^p$. Note that $\cup_{i \in [n]} (i,g^*_i)$ is a feasible matching in $G$. Thus we have, 

\begin{align*}
\sum_{i \in [n]} \Big( n \cdot v_i(Y_i) +  v_i(M \setminus H_i) \Big)^p &\leq \sum_{i \in [n]} \Big( n \cdot v_i(g^*_i) +  v_i(M \setminus H_i) \Big)^p 
\end{align*}

Therefore we have\footnote{For the set of inequalities that follow the reader is reminded that we are in the case where $p<0$.}

\begin{align*}
R(Z) &\leq \sum_{i \in [n]} \bigg(\frac{\alpha}{4(n+1)} \cdot \Big(n \cdot  v_i(Y_i) +  v_i(M \setminus H_i) \Big) \bigg)^p\\
&\leq \sum_{i \in [n]} \bigg(\frac{\alpha}{4(n+1)} \cdot \Big(n \cdot  v_i(g^*_i) +  v_i(M \setminus H_i) \Big) \bigg)^p\\
&= \frac{\alpha^p}{\big( 4(n+1) \big)^p} \cdot \sum_{i \in [n]} \Big(n \cdot  v_i(g^*_i) +  v_i(M \setminus H_i) \Big)^p\\
&\leq \frac{\alpha^p}{\big( 4(n+1) \big)^p} \cdot \sum_{i \in [n]} \Big(n \cdot  v_i(g^*_i) +  v_i \big(X^*_i \cap (M \setminus H_i) \big) \Big)^p\\     
&\leq \frac{\alpha^p}{\big( 4(n+1) \big)^p} \cdot \sum_{i \in [n]} \Big( v_i(X^*_i \cap H_i) +  v_i\big(X^*_i \cap (M \setminus H_i) \big) \Big)^p  &(\text{as $\lvert H_i \rvert =n$})\\
&\leq \frac{\alpha^p}{\big( 4(n+1) \big)^p} \cdot \sum_{i \in [n]} v_i(X^*_i)^p  &\text{(by subadditivity)}
\end{align*}

Now we are ready to prove the guarantee on the $p$-mean welfare. We have, 
\begin{align*}
M_p(Z) &= \Big(\frac{1}{n} \cdot R(Z) \Big)^{\tfrac{1}{p}}\\
&\geq \bigg ( \frac{1}{n} \cdot \frac{\alpha^p}{\big( 4(n+1) \big)^p} \cdot \Big( \sum_{i \in [n]} v_i(X^*_i)^p \Big) \bigg)^{\tfrac{1}{p}} \text{ (by Lemma~\ref{boundonR} and also p is negative)}\\
&\geq \frac{\alpha}{4(n+1)} \cdot M_p(X^*) \qedhere
\end{align*}

\subsection{Case $p=0$: Nash Welfare}
\label{Nash welfare}
This is the case where $M_p(X) = \Big(\prod_{i \in [n]} v_i(X_i) \Big)^{\tfrac{1}{n}}$. Let $X^*$ be the allocation with the highest $p$-mean value and let $g^*_i$ be agent $i$'s most valuable good in $X^*_i$. Like in the earlier sections we will show in this section that $M_p(Z) \geq \frac{\alpha}{4(n+1)} \cdot M_p(X^*)$. First observe that by Lemma~\ref{lowerboundfinal}, we have that for all $i \in [n]$, $v_i(Z_i) \geq \frac{\alpha}{4(n+1)} \cdot \bigg( n \cdot v_i(Y_i) +   v_i \Big( M \setminus H_i \Big) \bigg)$. Therefore,

\begin{align*}
\bigg(\prod_{i \in [n]} v_i(Z_i) \bigg)^{\tfrac{1}{n}} &\geq \bigg(\prod_{i \in [n]} \frac{\alpha}{4(n+1)} \cdot \bigg( n \cdot v_i(Y_i) +   v_i \Big( M \setminus H_i \Big) \bigg)^{\tfrac{1}{n}}\\
&=  \frac{\alpha}{4(n+1)} \cdot \bigg(\prod_{i \in [n]} \bigg( n \cdot v_i(Y_i) +   v_i \Big( M \setminus H_i \Big) \bigg)^{\tfrac{1}{n}}                                                       
\end{align*}

Recall that  $Y$ was chosen such that $(i,Y_i)$ is a maximum weight matching in the bipartite graph $G = ([n] \cup M, [n] \times M)$ where the weight of an edge from agent $i$ to good $g$, $w_{ig} = \log \Big(n \cdot v_i(g) + v_i(M \setminus H_i) \Big)$. Note that $\cup_{i \in [n]} (i,g^*_i)$ is a feasible matching in $G$. Thus we have 

\begin{align*}
\sum_{i \in [n]} \log \Big(n \cdot v_i(Y_i) + v_i(M \setminus H_i) \Big) &\geq \sum_{i \in [n]} \log \Big(n \cdot v_i(g^*_i) + v_i(M \setminus H_i) \Big)\\
\implies \prod_{i \in [n]} \Big(n \cdot v_i(Y_i) + v_i(M \setminus H_i) \Big) &\geq  \prod_{i \in [n]} \Big(n \cdot v_i(g^*_i) + v_i(M \setminus H_i) \Big)
\end{align*}
Therefore we have,

\begin{align*}
\bigg( \prod_{i \in [n]} v_i(Z_i) \bigg)^{\tfrac{1}{n}} &\geq \frac{\alpha}{4(n+1)} \cdot \bigg( \prod_{i \in [n]} \Big( n \cdot v_i(Y_i) + v_i(M \setminus H_i) \Big)  \bigg)^{\tfrac{1}{n}}\\
&\geq \frac{\alpha}{4(n+1)} \cdot \bigg( \prod_{i \in [n]} \Big(n \cdot v_i(g^*_i) + v_i(M \setminus H_i) \Big) \bigg)^{\tfrac{1}{n}}\\
&\geq \frac{\alpha}{4(n+1)} \cdot \bigg( \prod_{i \in [n]} \Big(n \cdot v_i(g^*_i) + v_i\big(X^*_i \cap (M \setminus H_i) \big) \Big) \bigg)^{\tfrac{1}{n}}\\
&\geq \frac{\alpha}{4(n+1)} \cdot \bigg( \prod_{i \in [n]} \Big(v_i(X^*_i \cap H_i) + v_i\big(X^*_i \cap (M \setminus H_i) \big) \Big) \bigg)^{\tfrac{1}{n}} &(\text{as } \lvert H_i \rvert =n)\\
&\geq \frac{\alpha}{4(n+1)} \cdot \Big( \prod_{i \in [n]} v_i(X^*_i) \Big)^{\tfrac{1}{n}} &(\text{ by subadditivity})
\end{align*}
This shows that $M_p(Z) \geq \frac{\alpha}{4(n+1)} \cdot M_p(X^*)$ when $p = 0$.

\subsection{Case $p \in (0,1]$}
The proof of the approximation guarantee in this case follows almost the same proof in the Section~\ref{negativep}, with the only difference that since $p$ is positive and we compute a Maximum weight matching in the bipartite graph $G = ([n] \cup M, [n] \times M)$ where the weight of an edge from agent $i$ to good $g$, $w_{ig} =  \Big(n \cdot v_i(g) + v_i(M \setminus H_i) \Big)^p$ and  we will have lower bounds on $R(Z)$ and consequently also lower bounds on $M_p(Z)$.

Therefore our algorithm computes an $(\alpha,c)$-EFX allocation which is also an $\tfrac{4(n+1)}{\alpha}$ approximation of the optimum $p$-mean welfare.

\begin{theorem}
	\label{mainthm}
	Given any instance $\langle [n],M, \mathcal{V} \rangle$, in polynomial time we can determine an allocation $Z$ such that 
	\begin{itemize}
		\item $Z$ is either $(1-\varepsilon,0)$-EFX allocation or $(\tfrac{1}{2}- \varepsilon,1)$-EFX allocation for any positive $\varepsilon$ and 
		\item $M_p(Z) \geq \tfrac{\alpha}{4(n+1)}M_p(X^*)$. 
	\end{itemize}
	where $X^*$ is the allocation with maximum $p$-mean welfare.
\end{theorem}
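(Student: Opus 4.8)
The plan is to run Algorithm~\ref{algorithmpmeans} with the appropriate instantiation of its $(\alpha,c)$-EFX subroutine and then assemble the facts already proved. For the first bullet we set $c=0$ and $\alpha=1-\varepsilon$ and use the polynomial-time $(1-\varepsilon,0)$-EFX algorithm of~\cite{CKMSarxiv}; for the second bullet we set $c=1$ and $\alpha=\tfrac{1}{2}-\varepsilon$ and use the polynomial-time modification of the algorithm of~\cite{TimPlaut18}. Both are $(\alpha,c)$-EFX algorithms in the sense of Section~\ref{preliminaries}, and the initial allocation $Y$ fed to them is a legitimate $\alpha$-EFX allocation because every agent holds exactly one good; hence the output $Z$ is, by construction, an $(\alpha,c)$-EFX allocation, which (being $\alpha$-EFX and EF1, with charity set $P$ satisfying $|P|<n$ and $v_i(Z_i)\ge v_i(P)$, and $P=\emptyset$ when $c=1$) is precisely the fairness statement claimed.

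For polynomial running time: constructing $G$ uses only $O(nm)$ value queries; maximum-weight matching, minimum-weight perfect matching, and max-min matching are all polynomial-time computable; the while loop of steps~3--4 runs for at most $O(nm)$ iterations, since each iteration strictly decreases the integer $\sum_{i\in[n]}\ra_i(Y_i)\in[n,nm]$ while keeping $\cup_{i\in[n]}(i,Y_i)$ an optimal matching; and the $(\alpha,c)$-EFX subroutine is polynomial by the cited results. So the whole algorithm runs in polynomial time.

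The welfare bound is where the content lies, and it is already in hand. The core estimate is Lemma~\ref{lowerboundfinal}, namely $v_i(Z_i)\ge\tfrac{\alpha}{4(n+1)}\bigl(n\cdot v_i(Y_i)+v_i(M\setminus H_i)\bigr)$ for every $i$, which itself rests on Lemma~\ref{Yisheavy} ($Y_i\subset H_i$ and nothing outside $\mathbf{Y}$ beats $Y_i$ for agent $i$), Observation~\ref{finalsingletons} (every singleton bundle of $Z$ lies inside $\mathbf{Y}$, so the ``bad'' singletons are exactly the high-valued goods we allocated carefully), and Observation~\ref{smallgoodsbound1} (which turns the naïve $v_i(M\setminus S)$ bound into the cleaner $v_i(M\setminus\mathbf{Y})$ bound). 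From there one compares the optimal matching $\cup_{i\in[n]}(i,Y_i)$ in $G$ against the feasible matching $\cup_{i\in[n]}(i,g^*_i)$, where $g^*_i$ is $i$'s most valuable good in $X^*_i$, to replace $v_i(Y_i)$ by $v_i(g^*_i)$ inside the objective; then $|H_i|=n$ gives $n\cdot v_i(g^*_i)\ge v_i(X^*_i\cap H_i)$ and subadditivity gives $v_i(X^*_i)\le v_i(X^*_i\cap H_i)+v_i(X^*_i\cap(M\setminus H_i))$, yielding $M_p(Z)\ge\tfrac{\alpha}{4(n+1)}M_p(X^*)$. The only thing that changes across the regimes $p=-\infty$, $p<0$ with $p\ne-\infty$, $p=0$, and $p\in(0,1]$ is whether the matching objective is a sum, a minimum edge, or a product of the suitably transformed edge weights, and whether one argues with a lower or an upper bound on $R(Z)=\sum_{i\in[n]} v_i(Z_i)^p$; these four routine variants are carried out in the preceding subsections (Sections~\ref{negativep} and \ref{Nash welfare}, together with the $p=-\infty$ and $p\in(0,1]$ subsections). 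No genuinely new difficulty arises at this point: Theorem~\ref{mainthm} is simply the packaging of the two instantiations $(\alpha,c)\in\{(1-\varepsilon,0),(\tfrac{1}{2}-\varepsilon,1)\}$ together with the uniform $\tfrac{\alpha}{4(n+1)}$ guarantee, and if any step deserves to be called the crux it is Lemma~\ref{lowerboundfinal} and the matching-feasibility comparison, both already established.
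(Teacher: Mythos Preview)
Your proposal is correct and follows essentially the same approach as the paper: invoke the preceding subsections for the welfare bound (via Lemma~\ref{lowerboundfinal} and the matching-feasibility comparison), observe that the output of the $(\alpha,c)$-EFX subroutine on the initial allocation $Y$ gives the fairness guarantee, and verify polynomial running time step by step. If anything, your write-up is more detailed than the paper's own proof, which simply says ``We showed that $Z$ is an $(\alpha,c)$-EFX allocation and $M_p(Z)\ge\tfrac{\alpha}{4(n+1)}M_p(X^*)$'' and then focuses on the running-time analysis.
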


\begin{proof}
	We showed that the allocation $Z$ computed by Algorithm~\ref{algorithmpmeans} is an $(\alpha,c)$-EFX allocation and $M_p(Z) \geq \tfrac{\alpha}{4(n+1)} \cdot M_p (X^*)$. It suffices to show that Algorithm~\ref{algorithmpmeans} runs in polynomial time. Note that steps 1 of the algorithm can be implemented in $\mathit{poly(n,m)}$ time. Step 2 can also be realized in polynomial time as all the matching subroutines run in $\mathit{poly}(n,m)$. The while loop in step 3 runs for $\mathit{poly}(n,m)$ iterations as with each iterations $\sum_{i \in [n]} \ra_i(Y_i)$ decreases by 1 and $n < \sum_{i \in [n]} \ra_i(Y_i) \leq nm$. In step 4, we run the $(\alpha,c)$-EFX algorithm with $Y$ as the initial allocation. Plaut and Roughgarden [PR18] and Chaudhury et al. [CKMS20] show an $(\tfrac{1}{2} -\varepsilon,1)$-EFX algorithm and $(1-\varepsilon,0)$-EFX algorithm respectively that runs in $\mathit{poly}(n,m,\tfrac{1}{\varepsilon})$ time. Therefore we can obtain an allocation $Z$ with the properties mentioned in theorem in $\mathit{poly}(n,m,\tfrac{1}{\varepsilon})$ time.  
\end{proof}

\textbf{Remark:} Theorem~\ref{mainthm} also suggest that we can find a $\frac{4(n+1)}{1 -\varepsilon}$ approximation to the $p$-mean welfare in polynomial time. Also it can be verified that a minor variant of our approach (changing the weights of the edges of the complete bipartite graph $G([n] \cup B, [n] \times B)$ appropriately - step 1 of Algorithm~\ref{algorithmpmeans}) gives a $\mathcal{O}(n)$ approximation on weighted generalized $p$-mean, defined as $WM_p(X) =   \big( \sum_{i \in [n]} \eta_i \cdot v_i(X_i)^p \big)^{\tfrac{1}{p}}$. In particular, we also get an $\mathcal{O}(n)$ approximation algorithm for asymmetric Nash welfare when agents have submodular valuations (improving the current best bound of $\mathcal{O}(n \cdot \log n)$ by Garg et al.~\cite{GargKK20}).


\begin{thebibliography}{AMGV18}
	
	\bibitem[ABF{\etalchar{+}}20]{ABFHV20}
	Georgios Amanatidis, Georgios Birmpas, Aris Filos{-}Ratsikas, Alexandros
	Hollender, and Alexandros~A. Voudouris.
	\newblock Maximum nash welfare and other stories about {EFX}.
	\newblock {\em CoRR}, abs/2001.09838, 2020.
	
	\bibitem[AGSS17]{AnariGSS17}
	Nima Anari, Shayan~Oveis Gharan, Amin Saberi, and Mohit Singh.
	\newblock {Nash Social Welfare, Matrix Permanent, and Stable Polynomials}.
	\newblock In {\em 8th Innovations in Theoretical Computer Science Conference
		(ITCS)}, pages 1--12, 2017.
	
	\bibitem[AMGV18]{AnariMGV18}
	Nima Anari, Tung Mai, Shayan~Oveis Gharan, and Vijay~V. Vazirani.
	\newblock Nash social welfare for indivisible items under separable,
	piecewise-linear concave utilities.
	\newblock In {\em Proc.\ 29th Symp.\ Discrete Algorithms (SODA)}, pages
	2274--2290, 2018.
	
	\bibitem[AMNS17]{AMNS17}
	Georgios Amanatidis, Evangelos Markakis, Afshin Nikzad, and Amin Saberi.
	\newblock Approximation algorithms for computing maximim share allocations.
	\newblock {\em ACM Transactions on Algorithms}, 13(4):52:1--52:28, 2017.
	
	\bibitem[BBKS20]{BarmanBKS'20}
	Siddharth Barman, Umang Bhaskar, Anand Krishna, and Ranjani~G. Sundaram.
	\newblock Tight approximation algorithms for p-mean welfare under subadditive
	valuations.
	\newblock {\em CoRR}, abs/2005.07370, 2020.
	
	\bibitem[BBMN18]{BarmanBMN18}
	Siddharth Barman, Arpita Biswas, Sanath Kumar~Krishna Murthy, and Yadati
	Narahari.
	\newblock Groupwise maximin fair allocation of indivisible goods.
	\newblock In {\em {AAAI}}, pages 917--924. {AAAI} Press, 2018.
	
	\bibitem[BCKO17]{BudishCKO17}
	Eric Budish, G{\'{e}}rard~P. Cachon, Judd~B. Kessler, and Abraham Othman.
	\newblock Course match: {A} large-scale implementation of approximate
	competitive equilibrium from equal incomes for combinatorial allocation.
	\newblock {\em Operations Research}, 65(2):314--336, 2017.
	
	\bibitem[BK17]{BK17}
	Siddharth Barman and Sanath~Kumar Krishnamurthy.
	\newblock Approximation algorithms for maximin fair division.
	\newblock In {\em Proceedings of the 18th ACM Conference on Economics and
		Computation (EC)}, pages 647--664, 2017.
	
	\bibitem[BK19]{BarmanK19}
	Siddharth Barman and Sanath~Kumar Krishnamurthy.
	\newblock On the proximity of markets with integral equilibria.
	\newblock In {\em Proc.\ 33rd Conf.\ Artif.\ Intell.\ (AAAI)}, 2019.
	
	\bibitem[BKV18]{BKV18}
	Siddharth Barman, Sanath~Kumar Krishnamurthy, and Rohit Vaish.
	\newblock Finding fair and efficient allocations.
	\newblock In {\em Proceedings of the 19th ACM Conference on Economics and
		Computation (EC)}, pages 557--574, 2018.
	
	\bibitem[BL16]{BL16}
	Sylvain Bouveret and Michel Lema\^itre.
	\newblock Characterizing conflicts in fair division of indivisible goods using
	a scale of criteria.
	\newblock In {\em Autonomous Agents and Multi-Agent Systems (AAMAS) 30, 2},
	pages 259--290, 2016.
	
	\bibitem[BS20]{BarmanRanjani_pmeans}
	Siddharth Barman and Ranjani~G. Sundaram.
	\newblock Uniform welfare guarantees under identical subadditive valuations.
	\newblock {\em CoRR}, abs/2005.00504, 2020.
	
	\bibitem[Bud11]{budish2011combinatorial}
	Eric Budish.
	\newblock The combinatorial assignment problem: Approximate competitive
	equilibrium from equal incomes.
	\newblock {\em Journal of Political Economy}, 119(6):1061--1103, 2011.
	
	\bibitem[CCG{\etalchar{+}}18]{ChaudhuryCGGHM18}
	Bhaskar~Ray Chaudhury, Yun~Kuen Cheung, Jugal Garg, Naveen Garg, Martin Hoefer,
	and Kurt Mehlhorn.
	\newblock On fair division for indivisible items.
	\newblock In {\em 38th {IARCS} Annual Conference on Foundations of Software
		Technology and Theoretical Computer Science, {FSTTCS}}, pages 25:1--25:17,
	2018.
	
	\bibitem[CDG{\etalchar{+}}17]{ColeDGJMVY17}
	Richard Cole, Nikhil Devanur, Vasilis Gkatzelis, Kamal Jain, Tung Mai, Vijay
	Vazirani, and Sadra Yazdanbod.
	\newblock Convex program duality, {F}isher markets, and {N}ash social welfare.
	\newblock In {\em Proc.\ 18th Conf.\ Economics and Computation (EC)}, 2017.
	
	\bibitem[CFS17]{ConitzerFS17}
	Vincent Conitzer, Rupert Freeman, and Nisarg Shah.
	\newblock Fair public decision making.
	\newblock In {\em Proc.\ 18th Conf.\ Economics and Computation (EC)}, pages
	629--646, 2017.
	
	\bibitem[CG18]{ColeG18}
	Richard Cole and Vasilis Gkatzelis.
	\newblock Approximating the nash social welfare with indivisible items.
	\newblock {\em {SIAM} J. Comput.}, 47(3):1211--1236, 2018.
	
	\bibitem[CGH19]{CaragiannisGravin19}
	Ioannis Caragiannis, Nick Gravin, and Xin Huang.
	\newblock Envy-freeness up to any item with high {N}ash welfare: The virtue of
	donating items.
	\newblock In {\em {EC}}, pages 527--545. {ACM}, 2019.
	
	\bibitem[CKM{\etalchar{+}}16]{CaragiannisKMP016}
	Ioannis Caragiannis, David Kurokawa, Herv{\'{e}} Moulin, Ariel~D. Procaccia,
	Nisarg Shah, and Junxing Wang.
	\newblock The unreasonable fairness of maximum {Nash} welfare.
	\newblock In {\em Proceedings of the 17th ACM Conference on Economics and
		Computation (EC)}, pages 305--322, 2016.
	
	\bibitem[CKMS19]{CKMSarxiv}
	Bhaskar~Ray Chaudhury, Telikepalli Kavitha, Kurt Mehlhorn, and Alkmini
	Sgouritsa.
	\newblock A little charity guarantees almost envy-freeness.
	\newblock {\em CoRR}, abs/1907.04596, 2019.
	
	\bibitem[CKMS20]{CKMS20}
	Bhaskar~Ray Chaudhury, Telikepalli Kavitha, Kurt Mehlhorn, and Alkmini
	Sgouritsa.
	\newblock A little charity guarantees almost envy-freeness.
	\newblock In {\em Proceedings of the 31st Symposium on Discrete Algorithms
		(SODA)}, pages 2658--2672, 2020.
	
	\bibitem[GHM18]{GargHM18}
	Jugal Garg, Martin Hoefer, and Kurt Mehlhorn.
	\newblock Approximating the {N}ash social welfare with budget-additive
	valuations.
	\newblock In {\em Proc.\ 29th Symp.\ Discrete Algorithms (SODA)}, 2018.
	
	\bibitem[GHS{\etalchar{+}}18]{GhodsiHSSY18}
	Mohammad Ghodsi, Mohammad~Taghi Hajiaghayi, Masoud Seddighin, Saeed Seddighin,
	and Hadi Yami.
	\newblock Fair allocation of indivisible goods: Improvements and
	generalizations.
	\newblock In {\em Proceedings of the 2018 {ACM} Conference on Economics and
		Computation (EC)}, pages 539--556, 2018.
	
	\bibitem[GKK20]{GargKK20}
	Jugal Garg, Pooja Kulkarni, and Rucha Kulkarni.
	\newblock Approximating {N}ash social welfare under submodular valuations
	through (un)matchings.
	\newblock In {\em SODA}, 2020.
	
	\bibitem[GM19]{GargM19}
	Jugal Garg and Peter McGlaughlin.
	\newblock Improving {N}ash social welfare approximations.
	\newblock In {\em {IJCAI}}, pages 294--300. ijcai.org, 2019.
	
	\bibitem[GMT19]{JGargMT19}
	Jugal Garg, Peter McGlaughlin, and Setareh Taki.
	\newblock Approximating maximin share allocations.
	\newblock In {\em Proceedings of the 2nd Symposium on Simplicity in Algorithms
		(SOSA)}, volume~69, pages 20:1--20:11. Schloss Dagstuhl - Leibniz-Zentrum
	fuer Informatik, 2019.
	
	\bibitem[GT19]{GargT19}
	Jugal Garg and Setareh Taki.
	\newblock An improved approximation algorithm for maximin shares.
	\newblock {\em CoRR}, abs/1903.00029, 2019.
	
	\bibitem[KP07]{KhotPonuswami07}
	Subhash Khot and Ashok~Kumar Ponnuswami.
	\newblock Approximation algorithms for the max-min allocation problem.
	\newblock In {\em {APPROX-RANDOM}}, volume 4627 of {\em Lecture Notes in
		Computer Science}, pages 204--217. Springer, 2007.
	
	\bibitem[KPW18]{KPW18}
	David Kurokawa, Ariel~D. Procaccia, and Junxing Wang.
	\newblock Fair enough: Guaranteeing approximate maximin shares.
	\newblock {\em Journal of ACM}, 65(2):8:1--27, 2018.
	
	\bibitem[Lee17]{Lee17}
	Euiwoong Lee.
	\newblock {APX}-hardness of maximizing {N}ash social welfare with indivisible
	items.
	\newblock {\em Inf. Process. Lett.}, 122:17--20, 2017.
	
	\bibitem[LLN06]{LehmannLN06}
	Benny Lehmann, Daniel Lehmann, and Noam Nisan.
	\newblock Combinatorial auctions with decreasing marginal utilities.
	\newblock {\em Games Econ. Behav.}, 55(2):270--296, 2006.
	
	\bibitem[LMMS04]{LiptonMMS04}
	Richard~J. Lipton, Evangelos Markakis, Elchanan Mossel, and Amin Saberi.
	\newblock On approximately fair allocations of indivisible goods.
	\newblock In {\em Proceedings of the 5th ACM Conference on Electronic Commerce
		(EC)}, pages 125--131, 2004.
	
	\bibitem[NR14]{NguyenR14}
	Trung~Thanh Nguyen and J{\"o}rg Rothe.
	\newblock Minimizing envy and maximizing average {N}ash social welfare in the
	allocation of indivisible goods.
	\newblock {\em Discrete Applied Mathematics}, 179:54--68, 2014.
	
	\bibitem[PR18]{TimPlaut18}
	Benjamin Plaut and Tim Roughgarden.
	\newblock Almost envy-freeness with general valuations.
	\newblock In {\em Proceedings of the 29th Annual ACM-SIAM Symposium on Discrete
		Algorithms (SODA)}, pages 2584--2603, 2018.
	
	\bibitem[Ste48]{Steinhaus48}
	Hugo Steinhaus.
	\newblock The problem of fair division.
	\newblock {\em Econometrica}, 16(1):101--104, 1948.
	
\end{thebibliography}

\newcommand{\etalchar}[1]{$^{#1}$}

\end{document}